\definecolor{URLColorBlue}{HTML}{2A1B81}
\definecolor{BlueGreen}{cmyk}{0.85,0,0.33,0}
\definecolor{RawSienna}{HTML}{A0522D}
\definecolor{TealLight}{HTML}{00C798}
\definecolor{Teal}{HTML}{008080}
\theoremstyle{plain}
\newtheorem{thm}{Theorem}
\newtheorem{lem}[thm]{Lemma}
\newtheorem{prop}[thm]{Proposition}
\newtheorem{cor}[thm]{Corollary}
\theoremstyle{definition}
\newtheorem{defn}[thm]{Definition}
\theoremstyle{remark}
\newtheorem{rem}[thm]{Remark}
\newcommand{\FF}{\mathbb{F}}
\newcommand{\dd}{\mathrm{d}}
\newcommand{\len}{\mathrm{len}}
\title{On optimal nonlinear systematic codes}
\author{
\IEEEauthorblockN{Eleonora Guerrini\IEEEauthorrefmark{1},  
Alessio Meneghetti\IEEEauthorrefmark{2}
 and Massimiliano Sala\IEEEauthorrefmark{2}}\thanks{This paper was presented in part at the Ninth International Workshop on Coding and Cryptography, WCC 2015, April 13-17, 2015 Paris, France.}
 
\IEEEauthorblockA{\IEEEauthorrefmark{1}LIRMM, Universit\'e
de Montpellier 2, France}

\IEEEauthorblockA{\IEEEauthorrefmark{2}Department of Mathematics,
University of Trento, Italy}
}
\begin{document}
\maketitle
\begin{abstract}
Most bounds on the size of codes hold for any code, whether linear or not. 
Notably, the Griesmer bound holds only in the linear case and so optimal linear codes are not necessarily optimal codes.
In this paper we identify code parameters $(q,d,k)$, namely field size, minimum distance and combinatorial dimension, for which the Griesmer bound holds also in the (systematic) nonlinear case. 
Moreover, we show that the Griesmer bound does not necessarily hold for a systematic code
 by explicit construction of a family of optimal systematic binary codes.
On the other hand, we are able to provide some versions of the Griesmer bound holding for all systematic codes.
\end{abstract}


\section{Introduction}
In this work we consider three sets of codes: linear, systematic and nonlinear codes. With code $C$ we mean a set of $M$ vectors in the vector space $\left(\mathbb{F}_q\right)^n$, where $\mathbb{F}_q$ is the finite field with $q$ elements. We refer to each of these vectors as a \textit{codeword} $c\in C$, to $n$ as the \textit{length} of $C$ and to $M$ as its \textit{size}. We denote with $d$ the minimum distance of $C$, i.e. the minimum among the Hamming distances between any two distinct codewords in $C$. A code $C$ with such parameters is denoted by an $(n,M,d)_q$ code.  
$C$ is a \textit{linear} code if $C$ is a vector subspace of $\left(\mathbb{F}_q\right)^n$. 
In this case, $M=q^k$ for a certain positive integer $k$ called the \textit{dimension} of the code.
A code which is not equivalent to any linear code is called a \textit{strictly nonlinear} code.
\\
Systematic codes form an important family of nonlinear codes. As we will show in Section \ref{sec:counterexample}, systematic codes can achieve better error correction capability than any linear code with the same parameters. On the other hand, due to their particular structure, systematic codes can achieve faster encoding and decoding procedures than nonlinear non-systematic codes. Moreover, many known families of optimal codes are systematic codes (see e.g., \cite{CGC-cd-art-preparata}, \cite{CGC-kerd72}).

\begin{defn}\label{defn: classical systematic}
An $(n,q^k,d)_q$ systematic code $C$ is the image of an injective map $F:\left(\FF_q\right)^k\to \left(\FF_q\right)^n$, $n\ge k$, s.t. 
a~vector $X = (x_1,\ldots,x_k) \in (\FF_q)^k$ is mapped to a vector $$(x_1,\ldots,x_k,f_{k+1}(X),\ldots,f_n(X)) \in (\FF_q)^n,$$
where 
$f_i,i=k+1,\ldots,n$ are maps from $(\FF_q)^k$ to $\FF_q$. 
We refer to $k$ as the \textit{combinatorial dimension} of $C$. 
The coordinates from 1 to $k$ are called \textit{systematic}, while those from $k+1$ to $n$ are called \textit{non-systematic}.
\end{defn}
It is well known that any linear code is equivalent to a systematic one. Note that $C$ is linear if and only if the maps $f_i$ are linear.

Recent results on systematic codes can be found in \cite{CGC-cod-art-alderson2008maximal} and \cite{CGC-cod-art-alderson2009maximality}, where it is proved that if a linear code admits  an extension (both the length and the distance are increased exactly by $1$), then it admits also a linear extension. Therefore, we observe that if puncturing a systematic code $C$ we obtain a linear code, then there exists a linear code with the same parameters as $C$. 
We denote with $\len(C),\dim(C),\dd(C)$, respectively, the length, the (combinatorial) dimension and the minimum distance of a code $C$.\\
A classical problem in coding theory is to determine the parameters of optimal codes, and this characterization is usually carried on by presenting bounds on the minimum distance, on the size, or on the length of codes. Since two equivalent codes have the same parameters, we can always assume that the zero codeword belongs to $C$.
In this work we consider the following definition of an optimal code.
\begin{defn}
Let $k$ and $d$ be two positive integers. An $(n,M,d)_q$ code $C$ is \textit{optimal} if all codes with the same distance and size  have length at least $n$.\\
An $(n,q^k,d)_q$ systematic code $C$ is \textit{optimal} if all systematic codes with the same distance and dimension  have length at least $n$.\\
We denote with $N_q(M,d)$, $S_q(k,d)$ and $L_q(k,d)$ the minimum length of, respectively, a nonlinear, systematic and linear code.
\end{defn}
We are interested in analysing the minimum possible length of a code whose distance and size are known.
\begin{rem}\label{rem: inequalities}
Clearly, $N_q(q^k,d)\leq S_q(k,d) \leq L_q(k,d)$.
\end{rem}
A well-known bound on the size of binary codes is the Plotkin bound \cite{CGC-cd-art-plotk60}, which can be applied to any code whose minimum distance is large enough w.r.t. its length.
\begin{thm}[Plotkin bound]\label{thm: plotkin nonbinary}
Any $(n,M,d)_q$ code satisfies
\begin{equation}\label{eq: plotkin length}
n \ge \left\lceil d \left(\frac{1-\frac{1}{M}}{1-\frac{1}{q}}\right)\right\rceil .
\end{equation}
Moreover, any $(n,M,d)_q$ code such that $n< \frac{qd}{q-1}$ satisfies 
$$
M \leq \left\lfloor \frac{d}{d-\left(1-\frac{1}{q}\right)n} \right\rfloor .
$$
\end{thm}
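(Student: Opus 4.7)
The plan is to prove both inequalities by the classical double counting of the sum $S = \sum_{c,c' \in C} \dd(c,c')$ over all ordered pairs of codewords (including $c = c'$, which contributes nothing). I will lower bound $S$ by counting pairs and upper bound $S$ coordinate by coordinate, then compare.

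For the lower bound, there are $M(M-1)$ ordered pairs of \emph{distinct} codewords, each contributing at least $d$ to the sum, so
$$
S \;\ge\; M(M-1)\,d.
$$
For the upper bound, fix a coordinate $i \in \{1,\ldots,n\}$ and, for each symbol $a \in \Fq$, let $m_{i,a}$ denote the number of codewords whose $i$-th entry equals $a$. Then $\sum_{a \in \Fq} m_{i,a} = M$, and the number of ordered pairs of codewords that differ in the $i$-th coordinate equals $M^2 - \sum_a m_{i,a}^2$. By Cauchy--Schwarz (or the power mean inequality) applied to $q$ nonnegative reals with fixed sum $M$,
$$
\sum_{a \in \Fq} m_{i,a}^2 \;\ge\; \frac{M^2}{q},
$$
so the contribution of coordinate $i$ to $S$ is at most $M^2\bigl(1 - \tfrac{1}{q}\bigr)$. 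Summing over the $n$ coordinates gives $S \le n\,M^2\bigl(1 - \tfrac{1}{q}\bigr)$.

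Combining the two bounds yields $M(M-1)d \le n M^2 (1 - 1/q)$, and dividing by $M^2 (1-1/q) > 0$ produces
$$
n \;\ge\; d\,\frac{1 - \tfrac{1}{M}}{1 - \tfrac{1}{q}}.
$$
Since $n$ is an integer, we can take the ceiling, giving \eqref{eq: plotkin length}. For the second statement, start from the same inequality $M(M-1)d \le nM^2(1 - 1/q)$ and rewrite it as $d(M-1) \le M\bigl(1 - \tfrac{1}{q}\bigr)n$, i.e.\ $M\bigl[d - (1 - \tfrac{1}{q})n\bigr] \le d$. Under the hypothesis $n < qd/(q-1)$, the bracket is strictly positive, so dividing through and taking the floor yields the claimed bound on $M$.

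The proof has no real obstacle: the only technical point is the inequality $\sum_a m_{i,a}^2 \ge M^2/q$, which is immediate from Cauchy--Schwarz with the vector of all ones, and the verification that the floor/ceiling step is valid, which is automatic since the left-hand sides are integers.
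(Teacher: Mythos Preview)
Your proof is correct and is precisely the classical double-counting argument for the Plotkin bound. Note that the paper does not actually prove this theorem: it is stated as a well-known result with a reference to Plotkin's original paper, so there is no ``paper's own proof'' to compare against. The argument you give is the standard one found in most textbooks, and every step is sound, including the Cauchy--Schwarz estimate $\sum_a m_{i,a}^2 \ge M^2/q$ and the integrality justifications for the floor and ceiling.
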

We also recall another useful bound, which is known to hold only for linear codes.
\begin{thm}[Griesmer bound]
Let $k$ and $d$ be two positive integers. Then
\begin{equation}\label{eq: griesmer}
L_q(k,d)\ge g_q(k,d) := \sum_{i=0}^{k-1}\left\lceil\frac{d}{q^i}\right\rceil
\end{equation}
\end{thm}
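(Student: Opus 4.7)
The plan is to prove the bound by induction on the combinatorial dimension $k$, using the classical residual code construction. The base case $k=1$ is immediate: any nonzero linear code of dimension $1$ consists of scalar multiples of a single codeword, which has weight at least $d$, so the length is at least $d = \lceil d/q^0 \rceil$.

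For the inductive step, I would start with a linear $[n,k,d]_q$ code $C$ and pick a codeword $c$ of minimum weight $d$, say with $\supp(c) = \{1,\ldots,d\}$ and $c_i \neq 0$ there. Let $\pi : C \to (\mathbb{F}_q)^{n-d}$ be the projection onto the last $n-d$ coordinates. Two facts drive the argument. First, any codeword supported on $\{1,\ldots,d\}$ has weight at most $d$, so by minimality it is either zero or a scalar multiple of $c$; hence $\ker(\pi) = \langle c\rangle$, and $\pi(C)$ is a linear code of length $n-d$ and dimension $k-1$. Second, $\pi(C)$ has minimum distance at least $\lceil d/q\rceil$.

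The key step is to establish this second claim, and I would handle it by an averaging argument. For a coset $v + \langle c\rangle$ with $v \notin \langle c\rangle$, consider $\sum_{\alpha \in \mathbb{F}_q} \w(v+\alpha c)$. In the first $d$ coordinates, since each $c_i \neq 0$, the values $v_i + \alpha c_i$ run over all of $\mathbb{F}_q$ as $\alpha$ varies, contributing exactly $(q-1)d$ to the sum. In the remaining coordinates, all $q$ shifts agree with $v$, contributing $q\,\w(\pi(v))$. Since each of the $q$ terms on the left is a nonzero codeword of $C$ and thus has weight at least $d$, one gets $qd \le (q-1)d + q\,\w(\pi(v))$, which rearranges to $\w(\pi(v)) \ge \lceil d/q \rceil$.

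Applying the inductive hypothesis to the $[n-d, k-1, \ge \lceil d/q\rceil]_q$ residual code and using the nested-ceiling identity $\lceil \lceil d/q\rceil / q^i \rceil = \lceil d/q^{i+1} \rceil$ yields $n - d \ge \sum_{i=1}^{k-1} \lceil d/q^i \rceil$; adding the leading $d = \lceil d/q^0\rceil$ term recovers exactly $g_q(k,d)$. The main obstacle is the averaging step that controls the residual distance. Notice that both ingredients — the residual construction (relying on closure under all scalar multiples of $c$) and the averaging over $\alpha \in \mathbb{F}_q$ — use linearity essentially, which explains why the bound does not extend automatically to the systematic or nonlinear settings investigated later in the paper.
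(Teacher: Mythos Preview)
The paper does not supply a proof of this theorem: the Griesmer bound is quoted as a classical result, with references to Griesmer for the binary case and Solomon--Stiffler for general $q$, and is used thereafter as a benchmark. Your argument is the standard residual-code proof and is correct. The only step you compress is the claim that $\ker(\pi)=\langle c\rangle$; this is fine, since any nonzero $v$ supported on $\{1,\ldots,d\}$ must have full support there, and then $v-(v_1/c_1)c$ has weight at most $d-1$ and hence vanishes. Your closing remark that both the residual construction and the averaging step genuinely require linearity is exactly the point the paper exploits when it later exhibits systematic nonlinear codes violating the bound.
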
 
\noindent 
The Griesmer bound, which can be seen as an extension of the Singleton bound \cite[Section 2.4]{CGC-cd-book-huffmanPless03} in the linear case, 
was introduced by Griesmer \cite{CGC-cd-art-griesm60} in the case of binary linear codes and then 
generalized by Solomon and Stiffler \cite{CGC-cod-art-solstiffl65} in the case of $q$-ary linear codes. 
It is known that the Griesmer bound is not always sharp \cite{CGC-cod-art-maruta1996non}, \cite{CGC-cod-art-van1980uniqueness}, \cite{CGC-cod-art-Maruta1997}. \\
Important examples of linear codes meeting the Griesmer bound are the simplex code \cite[Section 1.3]{CGC-cd-book-huffmanPless03} and the $[11,5,6]_3$ Golay code \cite[Section 1.12]{CGC-cd-book-huffmanPless03}, \cite{CGC-gola49}. \\

Many papers, such as
\cite{CGC-cod-art-helleseth1981characterization}, 
\cite{CGC-cod-art-hamada1993characterization}, 
\cite{CGC-cod-art-tamari1984linear}, 
\cite{CGC-cod-art-Maruta1997}, and 
\cite{CGC-cod-art-Klein04}, 
have characterized classes of linear codes meeting the Griesmer bound.
In particular, finite projective geometries play an important role in the study of these codes. 
For example in 
\cite{CGC-cod-art-hellesth1992projective}, 
\cite{CGC-cod-art-Hamada93} and 
\cite{CGC-cod-art-Tamari93} 
minihypers and maxhypers are used to characterize linear codes meeting the Griesmer bound. 
Research has been done also to characterize the codewords of linear codes meeting the Griesmer bound \cite{CGC-cod-art-ward1998divisibility}.\\
Many known bounds on the size of codes, for example 
the Johnson bound \cite{CGC-cd-art-john62},\cite{CGC-cd-art-john71},\cite{CGC-cd-book-huffmanPless03}, 
the Elias-Bassalygo bound \cite{CGC-cd-art-bass65},\cite{CGC-cd-book-huffmanPless03}, 
the Hamming (Sphere Packing) bound, 
the Singleton bound \cite{CGC-cd-book-pless98},  
the Zinoviev-Litsyn-Laihonen bound \cite{CGC-cod-art-zinlyts1984shortening}, \cite{CGC-cod-art-litlai98},
the Bellini-Guerrini-Sala bound \cite{CGC-cd-art-BellGuerrSal2014}, and
the Linear Programming bound \cite{CGC-cd-art-Dels73},
are true for both linear and (systematic) nonlinear codes. \\
On the other hand, the proof of the Griesmer bound heavily relies on the linearity of the code and it cannot be applied to all codes. 
\\ 

In this paper we present our results on  systematic codes and their relations to (possible extensions of) the Griesmer bound. In Section \ref{sec:systematic_nonlinear_codes} we prove that,
once $q$ and $d$ have been chosen, if all nonlinear $(n,q^{k},d)_q$ systematic codes with $k<1+\log_qd$ respect the Griesmer bound, then the Griesmer bound holds for all systematic codes with the same $q$ and $d$. Therefore, for any $q$ and $d$ only a finite set of $(k,n)$ pairs has to be analysed in order to prove the bound for all $k$ and $n$. In Section \ref{sec:systematic_griesmer} we identify several families of parameters for which the Griesmer bound holds in the systematic (nonlinear) case. 
In Section \ref{sec:weak_griesmer} we provide some versions of the Griesmer bound holding for systematic codes. 
\\
In the next sections we study optimal binary codes with small size, namely $M=4$ and $M=8$. In Section \ref{sec: optimal 4 words} we show that all optimal binary codes with $4$ codewords are necessarily (equivalent to) linear codes. In Section \ref{sec: code structure 8 codewords}  we show that for any possible distance, there exist binary linear codes with $8$ codewords achieving the Plotkin bound, and this implies that $N_2(8,d)=S_2(3,d)=L_2(3,d)$. Finally, in Section \ref{sec:counterexample}, we show explicit counterexamples of binary systematic codes for which the Griesmer bound does not hold, by constructing a family of optimal binary systematic codes. 
In the final section we draw our conclusions and hint at a future work and open problems.

From now on, $n$, $k$ and $d$ are positive integers, $n>k$, and $q\ge 2$ is the power of a prime.


\section{A sufficient condition to prove the Griesmer bound for systematic codes}\label{sec:systematic_nonlinear_codes}
The following proposition and lemma are well-known, we however provide a sketch of their proofs because they anticipate our later argument.
\begin{prop}\label{prop: shortening systematic codes}
Let $C$ be an $(n,q^k,d)$ systematic code, and $C'$ be the code obtained by shortening $C$ in a systematic coordinate. Then $C'$ is an $(n-1,q^{k-1},d')$ systematic code with $d'\ge d$.
\end{prop}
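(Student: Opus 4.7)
The plan is to unpack the definition of shortening in a systematic coordinate and verify the three claims (length, dimension, distance) directly from Definition~\ref{defn: classical systematic}. Without loss of generality I would shorten in the first coordinate, fixing the symbol $0 \in \FF_q$ there (any other choice of systematic coordinate and any other symbol is handled identically by a permutation of coordinates and a translation of the systematic block).

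First, starting from the encoding $F(x_1,\ldots,x_k)=(x_1,\ldots,x_k,f_{k+1}(X),\ldots,f_n(X))$, I would describe $C'$ explicitly as the image of the map $G : (\FF_q)^{k-1}\to (\FF_q)^{n-1}$ given by
\[
G(x_2,\ldots,x_k) = \bigl(x_2,\ldots,x_k,g_{k+1}(x_2,\ldots,x_k),\ldots,g_n(x_2,\ldots,x_k)\bigr),
\]
where $g_i(x_2,\ldots,x_k) := f_i(0,x_2,\ldots,x_k)$ for $i=k+1,\ldots,n$. This $G$ has the shape required by Definition~\ref{defn: classical systematic}, and it is injective because its first $k-1$ output coordinates already reproduce its argument. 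Hence $C'$ is a systematic code of length $n-1$ with combinatorial dimension $k-1$, yielding $|C'|=q^{k-1}$.

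Finally, for the distance bound, I would observe that $C'$ is obtained from the set $C_0 := \{c \in C : c_1 = 0\} \subseteq C$ by deleting coordinate $1$. Any two distinct codewords $c,c' \in C_0$ agree on coordinate $1$, so the deletion preserves their Hamming distance:
\[
\dd(G(x),G(y)) = \dd(F(0,x),F(0,y)) \ge d,
\]
and taking the minimum over all pairs gives $d' \ge d$. I do not expect any real obstacle here; the whole argument rests on the trivial observation that deleting a coordinate on which two vectors already coincide does not change their Hamming distance, and that freezing a systematic coordinate of a systematic encoder yields another systematic encoder in the remaining systematic variables.
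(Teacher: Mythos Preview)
Your proof is correct and follows the same approach as the paper: restrict to the subcode of codewords with a fixed value (say $0$) in the chosen systematic coordinate, then puncture that coordinate, observing that this preserves distances and yields a systematic encoder on the remaining $k-1$ systematic variables. Your version simply spells out the resulting map $G$ more explicitly than the paper's sketch.
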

\begin{proof}
To obtain $C'$, consider the code $C'' = \left\{F(X) \mid X = (0,x_2,\ldots,x_k) \in \left(\FF_q\right)^k \right\}$, i.e. the subcode of $C$ which is the image of the set of messages whose first coordinate is equal to $0$. 
Then $C''$ is such that $\dim(C'') = k-1$ and $\dd(C'') \ge d$. 
Since, by construction, all codewords have the first coordinate equal to zero, we obtain the code $C'$ by puncturing $C''$ on the first coordinate, so that $\len(C')=n-1$ and $d' = \dd(C') = \dd(C'') \ge d$.
\end{proof}
\begin{lem}\label{lem: reducing d}
For any $(n,q^k,d)$ systematic code $C$, there exists an $(n,q^k,\bar{d})$ systematic code $\bar{C}$ for any $1\leq\bar{d}\leq d$. 
\end{lem}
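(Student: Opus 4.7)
The plan is to proceed by downward induction on $\bar d$, starting from the trivial base case $\bar d = d$ where we simply take $\bar C = C$. The heart of the argument is therefore the inductive step: given any systematic $(n, q^k, d')$ code $C'$ with $d' \ge 2$, construct a systematic $(n, q^k, d'-1)$ code by modifying $C'$ at a single message.

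To do this, I would pick a pair of codewords $(F(X_1), F(X_2))$ realizing the minimum distance $d'$. In the typical situation these two codewords differ in at least one non-systematic coordinate $j > k$: I then redefine $\bar f_j(X_1) := f_j(X_2)$, leaving every other value of every $f_i$ unchanged. This preserves the systematic structure, since only the function $f_j$ has been altered, and only at the input $X_1$. Because exactly one coordinate of $F(X_1)$ was modified, $\dd(\bar F(X_1), F(X_2)) = d' - 1$ by construction, and for every third codeword $F(Y)$ the triangle inequality yields $\dd(\bar F(X_1), F(Y)) \ge \dd(F(X_1), F(Y)) - 1 \ge d' - 1$. Pairs not involving $X_1$ are unchanged, so the new minimum distance is exactly $d' - 1$.

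The remaining case is when every minimum-distance pair of $C'$ differs only in the $k$ systematic coordinates; this forces $d' \le k$ and the non-systematic portions of such pairs to coincide. Here I would modify one codeword more heavily: pick some $X^*$ and a systematic neighbor $X'$ with $\dd(X^*, X') = 1$, and replace the non-systematic part $\bar u(X^*)$ by a vector at Hamming distance exactly $d' - 2$ from $u(X')$. The Singleton-type inequality $n - k \ge d' - 1$, which holds because $C'$ already exists, guarantees that such a choice is feasible. Then $\dd(\bar F(X^*), F(X')) = 1 + (d' - 2) = d' - 1$, and a triangle-inequality case analysis, splitting on whether a given codeword $F(Y)$ sits in a minimum-distance pair with $F(X')$ in $C'$, shows that every other distance from $\bar F(X^*)$ remains at least $d' - 1$. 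This degenerate subcase is the main obstacle in the plan, being the only one in which a single-coordinate flip does not suffice to reduce the minimum distance by one.
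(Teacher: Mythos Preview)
Your Case 1 is fine, but the Case 2 argument has a real gap. Following your case split on whether $F(Y)$ is at minimum distance from $F(X')$: if $\dd(F(X'),F(Y))=d'$ then indeed $u(Y)=u(X')$ (by the Case~2 hypothesis), so $\dd(\bar u(X^*),u(Y))=d'-2$ and the bound goes through. But in the complementary subcase $\dd(F(X'),F(Y))\ge d'+1$, the only thing the triangle inequality gives you is
\[
\dd(\bar F(X^*),F(Y))\;\ge\;\dd(F(X'),F(Y))-\dd(\bar F(X^*),F(X'))\;\ge\;(d'+1)-(d'-1)\;=\;2,
\]
which falls short of $d'-1$ as soon as $d'\ge 4$. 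Concretely, take $q=2$ and a neighbour $Y$ of $X^*$ with $Y\ne X'$; then $\dd(X',Y)=2$, and the Case~2 hypothesis only forces $\dd(u(X'),u(Y))\ge d'-1$ (not equality and not $u(Y)=u(X')$). Nothing prevents $\bar u(X^*)$ from landing at distance $\le d'-3$ from such a $u(Y)$, giving $\dd(\bar F(X^*),F(Y))\le 1+(d'-3)<d'-1$. So your Case~2 construction needs, at minimum, a careful choice of $\bar u(X^*)$ together with an argument that such a choice exists; this is not supplied.

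The paper sidesteps the whole difficulty with a one-line idea: puncture a non-systematic coordinate. Each such puncture keeps the code systematic, keeps its size $q^k$, and drops the minimum distance by $0$ or $1$. Since puncturing all $n-k$ non-systematic coordinates leaves $(\FF_q)^k$ with distance $1$, some intermediate puncturing hits distance exactly $\bar d$; then pad with zeros to restore length $n$. This argument is uniform, avoids any case split on where minimum-distance pairs differ, and never touches individual codewords.
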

\begin{proof}
Since $n > k$, we can consider the code $C^1$ obtained by puncturing $C$ in a non-systematic coordinate.  
$C^1$ is an $(n-1,q^k,d^{(1)})$ systematic code. Of course, either $d^{(1)}=d$ or $d^{(1)}=d-1$.\\
By puncturing at most $n-k$ non-systematic coordinates, we will find a code whose distance is $1$. Then there must exists an $i\leq n-k$ such that the code $C^i$, obtained by puncturing $C$ in the last $i$ coordinates, has distance equal to $\bar{d}$. Once the $(n-i,q^k,\bar{d})$ code $C^i$ has been found, we can obtain the claimed code $\bar{C}$ by padding $i$ zeros to all codewords in $C^i$.
\end{proof}
%
We are ready to present our first result.
\begin{thm}\label{thm: minimum k q}
For fixed $q$ and $d$, if
\begin{equation} \label{eq:griesmer_inequality}
 S_q(k,d) \ge g_q(k,d)
\end{equation}
for all $k$ such that $1 \le k < 1+\log_q d$, 
then \eqref{eq:griesmer_inequality} holds for any $k$,
i.e. the Griesmer bound is true for all systematic codes over $\FF_q$ with minimum distance $d$.
\end{thm}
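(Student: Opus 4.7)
The plan is to prove the contrapositive direction by induction on $k$, using shortening in a systematic coordinate to reduce the $k$-dimensional case to the $(k{-}1)$-dimensional one. The hypothesis already handles $1\le k<1+\log_q d$, so the whole content is an induction step for $k\ge 1+\log_q d$, and the $k=1+\log_q d$ case connects the induction to the hypothesis.

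Before touching any code, I would record the only arithmetic fact I need: whenever $k\ge 1+\log_q d$, one has $q^{k-1}\ge d$, so $\lceil d/q^{k-1}\rceil = 1$, which means
\[
g_q(k,d) \;=\; g_q(k-1,d) + 1.
\]
Thus in the range of interest the Griesmer function grows by exactly one per additional coordinate, and the inductive step will only need to show that the optimal length increases by at least one when $k$ is incremented.

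For the inductive step, take an optimal systematic code $C$ of parameters $(S_q(k,d),q^k,d)$. By Proposition~\ref{prop: shortening systematic codes}, shortening $C$ in a systematic coordinate yields a systematic code $C'$ with parameters $(S_q(k,d)-1,\,q^{k-1},\,d')$ for some $d'\ge d$. Applying Lemma~\ref{lem: reducing d} to $C'$ with $\bar d=d\le d'$, I obtain a systematic code with parameters exactly $(S_q(k,d)-1,\,q^{k-1},\,d)$. Hence
\[
S_q(k-1,d) \;\le\; S_q(k,d)-1.
\]
Now invoke either the induction hypothesis (if $k-1\ge 1+\log_q d$) or the assumption of the theorem (if $k-1<1+\log_q d$) to conclude $S_q(k-1,d)\ge g_q(k-1,d)$. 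Combining with the arithmetic identity above,
\[
S_q(k,d) \;\ge\; g_q(k-1,d)+1 \;=\; g_q(k,d),
\]
which closes the induction.

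There is no real obstacle here: the statement is essentially a bookkeeping observation that the Griesmer bound grows minimally (by one) past the logarithmic threshold, so the slack provided by a single systematic shortening is always enough. The only point that deserves attention is making sure the shortened code genuinely contributes to $S_q(k-1,d)$ despite possibly having distance strictly larger than $d$; this is exactly what Lemma~\ref{lem: reducing d} is used for.
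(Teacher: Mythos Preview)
Your proof is correct and follows essentially the same approach as the paper: both use Proposition~\ref{prop: shortening systematic codes} and Lemma~\ref{lem: reducing d} to pass from dimension $k$ to $k-1$, together with the observation that $\lceil d/q^{k-1}\rceil=1$ once $k\ge 1+\log_q d$. The only cosmetic difference is that the paper phrases it as a minimal-counterexample argument while you phrase it as a direct induction; the underlying content is identical.
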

Before proving it, we remark that an equivalent formulation for Theorem \ref{thm: minimum k q} could be: \textit{If there exists an $(n,q^k,d)_q$ systematic code which does not satisfy the Griesmer bound, then there exists an $(n',q^{k'},d)_q$ systematic code with $k'<1+\log_qd$ which does not satisfy the Griesmer bound.}
\begin{proof}
 For each fixed $d$ and $q$, suppose there exists an $(n,q^k,d)_q$ systematic code not satisfying the Griesmer bound, i.e., there exists $k$ such that $S_q(k,d) < g_q(k,d)$. 
 Let us call 
 $\Lambda_{q,d} = \{k \ge 1 \mid S_q(k,d) < g_q(k,d) \}$.\\
 If $\Lambda_{q,d}$ is empty then the Griesmer bound is true for such parameters $q,d$.\\
 Otherwise, there exists a minimum $k'\in\Lambda_{q,d}$ such that $S_q(k',d) < g_q(k',d)$.\\
 In this case we can consider an $(n,q^{k'},d)_q$ systematic code $C$ not verifying the Griesmer bound,
 $n = S_q(k',d)$.
\\
We obtain an $(n-1,q^{k'-1},d')$ systematic code $C'$ whose distance is $d'\ge d$ by applying Proposition \ref{prop: shortening systematic codes} to $C$, then we apply Lemma \ref{lem: reducing d} to $C'$, hence we obtain an $(n-1,q^{k'-1},d)_q$ systematic code $\bar{C}$. 
\\
 Since $k'$ was the minimum among all the values in $\Lambda_{q,d}$, then the Griesmer bound holds for $\bar{C}$, and so 
 \begin{equation}\label{eq: griesmer n-1}
 n-1 \ge g_q(k'-1,d) = \sum_{i=0}^{k'-2}\left\lceil\frac{d}{q^i}\right\rceil.
 \end{equation}
 We observe that, if $q^{k'-1}\ge d$, then $\left\lceil\frac{d}{q^{k'-1}}\right\rceil=1$, so we can rewrite \eqref{eq: griesmer n-1} as
 \begin{align*}
  n \ge \sum_{i=0}^{k'-2}\left\lceil\frac{d}{q^i}\right\rceil +1\ge \sum_{i=0}^{k'-2}\left\lceil\frac{d}{q^i}\right\rceil + \left\lceil\frac{d}{q^{k'-1}}\right\rceil 
      = \sum_{i=0}^{k'-1}\left\lceil\frac{d}{q^i}\right\rceil
      = g_q(k',d)
 \end{align*}
 Since we supposed $n < g_q(k',d)$, we have reached a contradiction with the assumption $q^{k'-1}\ge d$. Hence for such $d$, the minimum $k$ in $\Lambda_{q,d}$ must satisfy $q^{k-1}<d$, which is equivalent to our claimed expression $k<1+\log_q d$.
\end{proof}
%
\section{Some parameters for which the Griesmer bound holds in the systematic case}
\label{sec:systematic_griesmer}
In this section we identify several sets of parameters $(q,d)$ for which the Griesmer bound holds for systematic codes. 
Subsections \ref{subsec: d|2q} and  \ref{section: q^(d-1)|d} deal with $q$-ary codes, while in Subsection \ref{subsec: 2r2s} we consider the special case of binary codes.
\subsection{The case $d \le 2q$}\label{subsec: d|2q}
\begin{thm}\label{thm: griesmer 2q}
If $d\le 2q$ then $S_q(k,d)\ge g_q(k,d)$.
\end{thm}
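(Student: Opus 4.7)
The plan is to invoke Theorem \ref{thm: minimum k q}, which reduces the problem to checking the Griesmer bound for systematic codes with $k<1+\log_q d$. Under the hypothesis $d\le 2q$, we have $\log_q d\le \log_q(2q)=1+\log_q 2\le 2$, so the only cases to check are $k=1$ and $k=2$. In other words, the entire theorem boils down to two elementary verifications, one trivial and the other an application of the Plotkin bound already available in Theorem \ref{thm: plotkin nonbinary}.

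For $k=1$, the Griesmer value is $g_q(1,d)=d$, and any $(n,M,d)_q$ code with $M\ge 2$ trivially satisfies $n\ge d$ because two distinct codewords must differ in at least $d$ coordinates. So $S_q(1,d)\ge d=g_q(1,d)$ is immediate.

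For $k=2$, I would apply the universal Plotkin inequality \eqref{eq: plotkin length} with $M=q^2$, obtaining
\[
n\ \ge\ \left\lceil d\,\frac{1-q^{-2}}{1-q^{-1}}\right\rceil\ =\ \left\lceil d\,\frac{q+1}{q}\right\rceil\ =\ \left\lceil d+\frac{d}{q}\right\rceil\ =\ d+\left\lceil\frac{d}{q}\right\rceil\ =\ g_q(2,d),
\]
where in the penultimate step I use that $d$ is an integer to pull it out of the ceiling. Combined with the $k=1$ argument and Theorem \ref{thm: minimum k q}, this finishes the proof.

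There is no serious obstacle: the whole content of the theorem lies in recognising that the hypothesis $d\le 2q$ confines the ``critical'' range from Theorem \ref{thm: minimum k q} to just $k\in\{1,2\}$, and that in both of these dimensions the Griesmer expression coincides with a bound already known to hold for all (not necessarily systematic) codes. The only step requiring a moment of care is verifying that $\lceil d(q+1)/q\rceil$ equals $d+\lceil d/q\rceil$, which follows from $d\in\mathbb{Z}$.
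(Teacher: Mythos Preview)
Your proof is correct, and it is in fact cleaner than the paper's own argument. Both proofs invoke Theorem~\ref{thm: minimum k q} to reduce to the cases $k=1$ and $k=2$, and both dispatch $k=1$ trivially. The difference is in how the case $k=2$ is handled.

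The paper treats $k=2$ only under the assumption $q<d\le 2q$ (since for $d\le q$ the range $k<1+\log_q d$ already excludes $k=2$), computes $g_q(2,d)=d+2$, and then argues by contradiction: an $(d+1,q^2,d)_q$ code would correspond to a set of $d-1>q-1$ mutually orthogonal Latin squares of order $q$, which is impossible. This brings in an external combinatorial result (the MOLS bound from Hill's textbook).

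Your route is more self-contained: you apply the universal Plotkin inequality~\eqref{eq: plotkin length} with $M=q^2$ and observe that $\lceil d(q+1)/q\rceil = d+\lceil d/q\rceil = g_q(2,d)$ for every $d$, not just for $q<d\le 2q$. This avoids the MOLS detour entirely and works uniformly without splitting into the subcases $d\le q$ and $q<d\le 2q$. It is also worth noting that your computation is a slight sharpening of what the paper records later in Proposition~\ref{prop: plotkin equal griesmer}, which only asserts the Plotkin--Griesmer coincidence under the divisibility hypothesis $q^{k-1}\mid d$; you show that for $k=2$ the coincidence holds unconditionally.
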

\begin{proof}
First, consider the case $d \leq q$. 
By Theorem \ref{thm: minimum k q} it is sufficient to show that, fixing $q$ and $d$, for any $n$ there is no $(n,q^k,d)_q$ systematic code with $1 \le k<1+\log_q d$ and $n < g_q(k,d)$.
If $1 \le k<1+\log_q d$ then $\log_q d \le \log_q q = 1$, and so $k$ may only be 1. 
Since $g_q(1,d) = d$ and $n \ge d$, we clearly have that $n \ge g_q(1,d)$.

Now consider the case $q<d\leq 2q$. 
If $1 \le k<1+\log_q d$ then $\log_q d \le \log_q 2q = 1+\log_q 2$, and so $k$ can only be 1 or 2. 
We have already seen that if $k=1$ then $n \ge g_q(k,d)$ for any $n$, so suppose $k=2$. 
If an $(n,q^2,d)_q$ systematic code $C$ exists with $n<\sum_{i=0}^1\left\lceil\frac{d}{q^i}\right\rceil=d+2$, then by the Singleton bound we can only have $n=d+1$.  
Therefore $C$ must have parameters $(d+1,q^2,d)$.
In \cite[Ch.~10]{CGC-cod-book-hill1986first} it is proved that a $q$-ary $(n,q^2,n-1)_q$ code is equivalent to a set of $n-2$ mutually orthogonal Latin squares (MOLS) of order $q$,
and that there are at most $q-1$ Latin squares in any set of MOLS of order $q$ (Theorem 10.18). In our case $n=d+1>q+1$, therefore $n-2>q-1$. The existence of $C$ would imply the existence of a set of more than $q-1$ MOLS, which is impossible.
\end{proof}
\subsection{The case $q^{k-1} \mid d$}
 \label{section: q^(d-1)|d}
The following proposition is a simple consequence of the Plotkin bound that implies some results on values for the distance and dimension for which the Griesmer bound holds in the nonlinear case. We will also make use of this result to obtain a version of the Griesmer bound which can be applied to all systematic codes.
\begin{prop}\label{prop: plotkin equal griesmer}
If $q^{k-1}\mid d$, then the Griesmer bound coincides with the Plotkin bound in equation \eqref{eq: plotkin length}.
\end{prop}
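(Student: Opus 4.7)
The plan is to compute both bounds explicitly under the hypothesis $q^{k-1}\mid d$ and show they yield the same integer value. Since $q^{k-1}\mid d$ implies $q^i\mid d$ for every $0\le i \le k-1$, each quotient $d/q^i$ appearing in the Griesmer bound is already an integer, so every ceiling can be dropped. Summing the resulting finite geometric series gives
\[
g_q(k,d)=\sum_{i=0}^{k-1}\frac{d}{q^i}=d\cdot\frac{1-q^{-k}}{1-q^{-1}}.
\]

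For the Plotkin side, I would substitute $M=q^k$ into the right-hand side of \eqref{eq: plotkin length}, obtaining $d(1-q^{-k})/(1-q^{-1})$, which is precisely the expression above. It remains only to check that the ceiling on the Plotkin side is trivial, i.e.\ that this quantity is an integer. Writing $d=q^{k-1}m$ for some positive integer $m$, a direct computation yields
\[
d\cdot\frac{1-q^{-k}}{1-q^{-1}}=\frac{d(q^k-1)}{q^{k-1}(q-1)}=m\cdot\frac{q^k-1}{q-1}=m\sum_{i=0}^{k-1}q^i,
\]
which is manifestly an integer, so the ceiling has no effect and both bounds coincide.

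There is no real obstacle here: the argument is a routine algebraic identification, and the only thing worth stressing is that the divisibility hypothesis is exactly what is needed to remove every ceiling from the Griesmer sum as well as the Plotkin expression simultaneously. The resulting common value $m(q^{k-1}+q^{k-2}+\cdots+1)$ is what will be invoked later when transferring Plotkin-type conclusions into Griesmer-type statements for systematic codes.
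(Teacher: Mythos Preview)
Your proof is correct and follows essentially the same approach as the paper: drop the ceilings in the Griesmer sum using $q^{k-1}\mid d$ and identify the resulting geometric series with the Plotkin expression for $M=q^k$. You are in fact slightly more careful than the paper, which omits the verification that the Plotkin ceiling is trivial; your computation $d(q^k-1)/(q^{k-1}(q-1))=m\sum_{i=0}^{k-1}q^i$ fills that small gap.
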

\begin{proof}
If $q^{k-1}\mid d$, then
$
g_q(k,d)=\sum_{i=0}^{k-1}\frac{d}{q^{i}}=d\sum_{i=0}^{k-1}\frac{1}{q^i}=d\frac{1-\frac{1}{q^k}}{1-\frac{1}{q}}
$.
\end{proof}

\begin{cor}\label{cor: griesmer 1}
Let $r\ge 1$, then $N_q(q^k,q^{k-1}r)\ge g_q(k,q^{k-1}r)$.
\end{cor}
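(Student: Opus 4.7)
The plan is to observe that this corollary is a direct combination of two results already in hand: the Plotkin bound (Theorem \ref{thm: plotkin nonbinary}), which applies to every $(n,M,d)_q$ code regardless of linearity, and Proposition \ref{prop: plotkin equal griesmer}, which identifies the Griesmer bound with the Plotkin expression exactly when $q^{k-1}\mid d$. Since the hypothesis $d=q^{k-1}r$ forces $q^{k-1}\mid d$, no further structural assumption on the code is needed, so the argument should work uniformly for arbitrary (including strictly nonlinear) codes.

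First I would let $C$ be any $(n,q^k,d)_q$ code with $d=q^{k-1}r$, and apply the length-form of the Plotkin bound from \eqref{eq: plotkin length} with $M=q^k$, getting
\[
n\ge\left\lceil d\,\frac{1-q^{-k}}{1-q^{-1}}\right\rceil.
\]
Next I would invoke Proposition \ref{prop: plotkin equal griesmer}, which tells us precisely that the right-hand side equals $g_q(k,d)$ (and that this value is already an integer under the divisibility hypothesis, so the ceiling is inert). Combining these two observations gives $n\ge g_q(k,d)$, and since $C$ was arbitrary this yields $N_q(q^k,q^{k-1}r)\ge g_q(k,q^{k-1}r)$.

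There is no real obstacle: the work was done in Proposition \ref{prop: plotkin equal griesmer}, and the Plotkin bound is by assumption available in the general nonlinear setting. The only point that warrants a sentence of care is confirming integrality in the step that removes the ceiling, but this is immediate from $q^{k-1}\mid d$ because then each $d/q^i$ for $0\le i\le k-1$ is an integer, hence $g_q(k,d)=\sum_{i=0}^{k-1}d/q^i\in\mathbb{Z}$.
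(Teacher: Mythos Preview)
Your proof is correct and follows exactly the paper's approach: the paper's own proof is the single line ``Follows directly from Proposition \ref{prop: plotkin equal griesmer},'' and you have simply spelled out the implicit use of the Plotkin bound together with the observation that $q^{k-1}\mid d$ makes the Plotkin expression an integer equal to $g_q(k,d)$.
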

\begin{proof}
Follows directly from Proposition \ref{prop: plotkin equal griesmer}.
\end{proof}
Note that Corollary  \ref{cor: griesmer 1} is not restricted to systematic codes, and holds for any code with at least $q^k$ codewords, so we can obtain directly the next corollary.
\begin{cor}
Let $M\ge q^k$ and $r\ge 1$, then $N_q(M,q^{k-1}r)\ge g_q(k,q^{k-1}r)$.
\end{cor}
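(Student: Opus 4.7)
The plan is to derive this directly from the Plotkin bound (Theorem \ref{thm: plotkin nonbinary}) combined with Proposition \ref{prop: plotkin equal griesmer}, generalizing the previous corollary from $M=q^k$ to $M\ge q^k$. The key observation is that the Plotkin length bound is monotone increasing in $M$.

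First, I would apply the Plotkin bound \eqref{eq: plotkin length} with $d = q^{k-1}r$ to any $(n,M,q^{k-1}r)_q$ code, obtaining
\[
n \ge \left\lceil q^{k-1}r \cdot \frac{1-\frac{1}{M}}{1-\frac{1}{q}} \right\rceil.
\]
Next, since the function $x \mapsto 1 - 1/x$ is increasing and $M \ge q^k$, I would estimate $1 - 1/M \ge 1 - 1/q^k$, so that the right-hand side above is at least
\[
\left\lceil q^{k-1}r \cdot \frac{1-\frac{1}{q^k}}{1-\frac{1}{q}} \right\rceil.
\]
Finally, by Proposition \ref{prop: plotkin equal griesmer} (applied with $d = q^{k-1}r$, where the divisibility hypothesis $q^{k-1}\mid d$ holds), this last quantity equals $g_q(k, q^{k-1}r)$; note that the ceiling is harmless because $g_q(k,d)$ is already an integer (a finite sum of integers $d/q^i$).

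There is essentially no obstacle here: the result is a one-line consequence of the monotonicity of the Plotkin bound in $M$, packaged with the previously established identity between Plotkin and Griesmer under the divisibility condition. The only point to be careful about is to explicitly note that the ceiling function preserves the inequality after substitution, which is automatic because $g_q(k,q^{k-1}r)$ is an integer.
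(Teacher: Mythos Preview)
Your proof is correct and matches the paper's approach: the paper simply remarks that Corollary~\ref{cor: griesmer 1} ``is not restricted to systematic codes, and holds for any code with at least $q^k$ codewords,'' which is exactly the monotonicity-in-$M$ of the Plotkin length bound that you spell out. Your write-up just makes this one-line observation explicit.
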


The following lemma holds for any nonlinear code.
\begin{lem}\label{lem: small r}
Let $1\leq r< q$, $l\ge 0$, $d=q^lr$ and let $q^{k-1}\leq d$. Then \ $N_q(q^k,d)\ge g_q(k,d)$.
\end{lem}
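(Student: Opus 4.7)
The plan is to reduce this lemma to Corollary \ref{cor: griesmer 1}, which has already established the bound whenever the minimum distance is a multiple of $q^{k-1}$. So the whole game is to verify that the hypotheses of Lemma~\ref{lem: small r} force $q^{k-1} \mid d$.

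First I would unpack the inequality $q^{k-1}\le d = q^l r$. Since by assumption $1 \le r < q$, we get $d = q^l r < q^{l+1}$, which combined with $q^{k-1} \le d$ yields $q^{k-1} < q^{l+1}$, hence $k-1 \le l$. This is the only arithmetic observation needed.

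Next, because $k-1 \le l$, the prime-power factor $q^{k-1}$ divides $q^l$ and therefore divides $d = q^l r$. Writing $d = q^{k-1} \cdot s$ with $s = q^{l-k+1} r \ge 1$, we are exactly in the hypotheses of Corollary \ref{cor: griesmer 1}, which gives $N_q(q^k, d) \ge g_q(k, d)$.

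The main obstacle, if there is one, is purely bookkeeping: making sure the divisibility really holds in the edge cases (for instance $l = k-1$, where $s = r$, and $l=0$, where the condition $q^{k-1}\le d$ forces $k=1$). Once $q^{k-1}\mid d$ is in hand, Proposition \ref{prop: plotkin equal griesmer} turns the Griesmer bound into the Plotkin bound of Theorem \ref{thm: plotkin nonbinary}, which holds for every code (not merely systematic ones); so the lemma follows without any further work and no appeal to linearity or to systematic structure is required.
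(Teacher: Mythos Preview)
Your argument is correct and is essentially the same as the paper's: from $1\le r<q$ you deduce $q^{k-1}\le d<q^{l+1}$, hence $k-1\le l$, so $q^{k-1}\mid d$, and then the Plotkin--Griesmer identification (Proposition~\ref{prop: plotkin equal griesmer}, or equivalently Corollary~\ref{cor: griesmer 1}) finishes the job. The only cosmetic difference is that the paper cites Proposition~\ref{prop: plotkin equal griesmer} directly rather than routing through Corollary~\ref{cor: griesmer 1}.
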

\begin{proof}
Since $1\leq r<q$, the hypothesis $q^{k-1}\leq d$ is equivalent to $k-1\leq l$, hence $q^{k-1}\mid d$ and we can apply Proposition \ref{prop: plotkin equal griesmer}.  
\end{proof}
\begin{prop}\label{prop: griesmer qlr}
Let $1\leq r<q$ and $l\ge0$. Then \ $S_q(k,q^lr)\ge g_q(k,q^lr)$.
\end{prop}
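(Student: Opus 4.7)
The plan is to reduce the statement to a finite range of dimensions via Theorem~\ref{thm: minimum k q}, and then on that range identify the Griesmer bound with the Plotkin bound using Proposition~\ref{prop: plotkin equal griesmer}.

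First I would invoke Theorem~\ref{thm: minimum k q}: it suffices to verify $S_q(k,d) \ge g_q(k,d)$ for those $k$ satisfying $1 \le k < 1 + \log_q d$, where $d = q^l r$. Because $1 \le r < q$, we have $\log_q d = l + \log_q r < l + 1$, and so the relevant range is precisely $1 \le k \le l+1$, i.e.\ $k-1 \le l$.

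Next, for each such $k$ the divisibility $q^{k-1} \mid d$ is automatic: since $k-1 \le l$, we have $q^{k-1} \mid q^l$, and hence $q^{k-1} \mid q^l r = d$. Proposition~\ref{prop: plotkin equal griesmer} then tells us that the Griesmer length $g_q(k,d)$ coincides with the Plotkin length $\bigl\lceil d(1 - q^{-k})/(1 - q^{-1}) \bigr\rceil$. Because the Plotkin bound \eqref{eq: plotkin length} is valid for \emph{every} $(n,M,d)_q$ code (not just linear or systematic ones), it applies in particular to any systematic code of size $q^k$, yielding $S_q(k,d) \ge g_q(k,d)$ for all $k$ in the required range.

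The conclusion then follows directly from Theorem~\ref{thm: minimum k q}. I do not anticipate a serious obstacle: the whole argument is a short bookkeeping check that the hypothesis $d = q^l r$ with $r < q$ forces the finite range of ``small'' $k$ produced by Theorem~\ref{thm: minimum k q} to lie entirely inside the regime $q^{k-1} \mid d$ where Plotkin and Griesmer agree. The only point that needs care is the strict inequality $r < q$ (as opposed to $r \le q$), which is exactly what prevents $k$ from reaching $l+2$ and thus keeps $q^{k-1}$ from exceeding $q^l$.
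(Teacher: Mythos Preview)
Your proposal is correct and follows essentially the same route as the paper: reduce via Theorem~\ref{thm: minimum k q} to the range $q^{k-1}\le d$, observe that $r<q$ forces $k-1\le l$ and hence $q^{k-1}\mid d$, then invoke Proposition~\ref{prop: plotkin equal griesmer} so that Plotkin delivers the Griesmer inequality. The only cosmetic difference is that the paper packages the second step as Lemma~\ref{lem: small r} rather than unfolding it inline.
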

\begin{proof}
Due to Theorem \ref{thm: minimum k q} we only need to prove that the Griesmer bound is true for all choices of $k$ such that $q^{k-1}\leq d$.
Then we can use Lemma \ref{lem: small r}, which ensures that all such codes respect the Griesmer bound.
\end{proof}
\begin{cor}\label{cor: griesmer 2l}
Let $q=2$ and $l\ge 0$. Then $S_2(k,2^l)\ge g_2(k,2^l)$.
\end{cor}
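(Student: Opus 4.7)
The plan is to derive this as the immediate specialization of Proposition~\ref{prop: griesmer qlr} to the binary alphabet. Fix $q = 2$ and $l \ge 0$. The hypothesis of Proposition~\ref{prop: griesmer qlr} requires an integer $r$ with $1 \le r < q$, which in the binary case forces $r = 1$ as the unique admissible value. With this choice one has $d = q^l r = 2^l$, so the bound $S_q(k, q^l r) \ge g_q(k, q^l r)$ becomes exactly $S_2(k, 2^l) \ge g_2(k, 2^l)$, which is the claim.

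There is no real obstacle here: the corollary is just the $q = 2$, $r = 1$ instance of the proposition, and all the structural work (reduction to $k < 1 + \log_q d$ via Theorem~\ref{thm: minimum k q}, together with the Plotkin/Griesmer coincidence from Proposition~\ref{prop: plotkin equal griesmer} applied through Lemma~\ref{lem: small r}) has already been carried out in the proof of Proposition~\ref{prop: griesmer qlr}. Consequently, the proof can be stated in a single sentence, observing that for $q=2$ the constraint $1 \le r < q$ admits only $r=1$, so every binary distance of the form $2^l$ falls within the scope of Proposition~\ref{prop: griesmer qlr}.
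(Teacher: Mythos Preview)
Your proof is correct and matches the paper's own argument exactly: the paper's proof is the single line ``It follows directly from Proposition~\ref{prop: griesmer qlr}, with $r=1$.'' Your additional remarks about the underlying machinery (Theorem~\ref{thm: minimum k q}, Lemma~\ref{lem: small r}, Proposition~\ref{prop: plotkin equal griesmer}) are accurate but unnecessary, since they are already absorbed into the proof of Proposition~\ref{prop: griesmer qlr}.
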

\begin{proof}
It follows directly from Proposition \ref{prop: griesmer qlr}, with $r=1$.
\end{proof}
\subsection{The case $q=2$, $d=2^r - 2^s$}\label{subsec: 2r2s}
In this section we prove that the Griesmer bound holds for all binary systematic codes whose distance is the difference of two powers of $2$. We need the following lemmas.
\begin{lem}\label{lem: g(r+1)=2g(r)}
Let $r\ge 0$ and let $k\leq r+1$. Then $$g_2(k,2^{r+1}) = 2 g_2(k,2^r) .$$
\end{lem}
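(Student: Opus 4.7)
The plan is a direct computation using the definition $g_2(k,d) = \sum_{i=0}^{k-1} \lceil d/2^i \rceil$. The hypothesis $k \le r+1$ is introduced precisely so that the ceilings disappear, after which the factor of $2$ can be pulled out termwise.

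First, I would write out both sides explicitly:
\begin{align*}
g_2(k,2^{r+1}) &= \sum_{i=0}^{k-1}\left\lceil\frac{2^{r+1}}{2^i}\right\rceil, &
g_2(k,2^{r}) &= \sum_{i=0}^{k-1}\left\lceil\frac{2^{r}}{2^i}\right\rceil.
\end{align*}
The key observation is that each index $i$ in these sums satisfies $0 \le i \le k-1 \le r$, so that $r-i \ge 0$ and $r+1-i \ge 1$. Consequently $2^r/2^i = 2^{r-i}$ and $2^{r+1}/2^i = 2^{r+1-i}$ are both nonnegative integer powers of $2$, and the ceilings can be dropped.

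Once the ceilings are gone, the identity is immediate: factoring a single $2$ out of every term yields
\[
g_2(k,2^{r+1}) = \sum_{i=0}^{k-1} 2^{r+1-i} = 2\sum_{i=0}^{k-1} 2^{r-i} = 2\, g_2(k,2^{r}).
\]
There is essentially no obstacle here; the only subtlety is remembering to invoke the hypothesis $k \le r+1$ at the right moment, since without it the largest index $i = k-1$ could exceed $r$ and force a nontrivial ceiling $\lceil 2^{r-i}\rceil = 1$ in the $g_2(k,2^r)$ sum that has no matching doubled counterpart on the other side.
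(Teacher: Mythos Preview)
Your proof is correct and follows essentially the same approach as the paper: use $k\le r+1$ to ensure every index $i\le k-1\le r$, drop the ceilings, and factor out the $2$ termwise.
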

\begin{proof}
The hypothesis $k\leq r+1$ implies that for any $i\leq k-1$, both $\left\lceil\frac{2^{r+1}}{2^i}\right\rceil=\frac{2^{r+1}}{2^i}$ and $\left\lceil\frac{2^{r}}{2^i}\right\rceil=\frac{2^{r}}{2^i}$. Therefore
$$
g_2(k,2^{r+1})=\sum_{i=0}^{k-1}\left\lceil
\frac{2^{r+1}}{2^i}
\right\rceil
=\sum_{i=0}^{k-1}
\frac{2^{r+1}}{2^i}
=2\sum_{i=0}^{k-1}
\frac{2^{r}}{2^i}
=2\sum_{i=0}^{k-1}
\left\lceil\frac{2^{r}}{2^i}\right\rceil
=
2g_2(k,2^r)
$$
\end{proof}
\begin{lem}\label{lem: increase g(k,d)}
Let $l\ge 0$ be the maximum integer such that $2^l$ divides $d$. Then
\begin{equation}\label{eq: inrease g(k,d)}
g_2(k,d+1)=g_2(k,d)+\min(k,l+1),
\end{equation}
\end{lem}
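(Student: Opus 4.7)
The plan is to evaluate the difference termwise. Write
\[
g_2(k,d+1)-g_2(k,d)\;=\;\sum_{i=0}^{k-1}\Delta_i,\qquad \Delta_i:=\left\lceil\tfrac{d+1}{2^i}\right\rceil-\left\lceil\tfrac{d}{2^i}\right\rceil,
\]
and show that $\Delta_i=1$ for $i\le l$ and $\Delta_i=0$ for $i\ge l+1$. The conclusion then follows by counting: the indices $i\in\{0,\ldots,k-1\}$ that satisfy $i\le l$ are precisely those in $\{0,\ldots,\min(l,k-1)\}$, which has cardinality $\min(l+1,k)$, giving exactly the right-hand side of \eqref{eq: inrease g(k,d)}.

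For the first case, when $0\le i\le l$ we have $2^i\mid d$ by the definition of $l$, so $\lceil d/2^i\rceil = d/2^i$ is an integer; since $0<1/2^i\le 1$, we get $\lceil(d+1)/2^i\rceil=d/2^i+1$ and hence $\Delta_i=1$.

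For the second case, when $i\ge l+1$ we have $2^i\nmid d$ (since $l$ is the largest power of $2$ dividing $d$), so we can write $d=2^i a+r$ with $1\le r\le 2^i-1$. Then $\lceil d/2^i\rceil=a+1$, and $d+1=2^ia+(r+1)$ with $2\le r+1\le 2^i$. Whether $r+1<2^i$ (in which case $\lceil(d+1)/2^i\rceil=a+1$) or $r+1=2^i$ (in which case $d+1=2^i(a+1)$, so the ceiling equals $a+1$ too), we conclude $\Delta_i=0$.

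The only subtle point is the boundary index $i=l+1$, where one must check that the extra $+1$ does not push $d+1$ past a new multiple of $2^i$ in a way that affects the ceiling; the remainder computation above handles this cleanly by considering both sub-cases $r+1<2^i$ and $r+1=2^i$ at once. Everything else is a routine application of properties of the ceiling function.
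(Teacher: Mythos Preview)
Your proof is correct and follows essentially the same approach as the paper's: both arguments rest on the observation that $\lceil (d+1)/2^i\rceil-\lceil d/2^i\rceil$ equals $1$ when $2^i\mid d$ (i.e.\ $i\le l$) and $0$ otherwise. The paper organizes this by first splitting into the cases $k\le l+1$ and $k>l+1$ and then summing, whereas you compute each $\Delta_i$ uniformly and count at the end; the underlying arithmetic is identical.
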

\begin{proof}
Clearly $d=2^lr$, where $r$ is odd, and the Griesmer bound becomes
\begin{equation}\label{eq: griesmer increase distance}
g_2(k,d+1)=\sum_{i=0}^{k-1}\left\lceil\frac{2^lr+1}{2^i}\right\rceil .
\end{equation}

We consider first the case $k\leq l+1$, 
and we observe that for each $i$ we have
$$
\left\lceil\frac{2^lr+1}{2^i}\right\rceil=
\frac{2^lr}{2^i}+ \left\lceil\frac{1}{2^i}\right\rceil=
\frac{2^lr}{2^i}+1=
\left\lceil\frac{2^lr}{2^i}\right\rceil+1.
$$
Therefore 
\begin{equation}\label{eq: k<l+1}
g_2(k,d+1)
=\sum_{i=0}^{k-1}\left(\left\lceil\frac{2^lr}{2^i}\right\rceil+1\right)
=g_2(k,d)+k.
\end{equation}

If $k>l+1$ we can split the sum \eqref{eq: griesmer increase distance} in the two following sums:
\begin{equation}\label{eq: split sum}
g_2(k,d+1)=\left(\sum_{i=0}^{l}\left\lceil\frac{2^lr+1}{2^i}\right\rceil\right)
+
\left(\sum_{i=l+1}^{k-1}\left\lceil\frac{2^lr+1}{2^i}\right\rceil\right).
\end{equation}
For the first sum we make use of the same argument as above, while for the second sum we observe that $i>l$, which implies
$$
\left\lceil\frac{2^lr+1}{2^i}\right\rceil=\left\lceil\frac{2^lr}{2^i}\right\rceil.
$$
Putting together the two sums, equation \eqref{eq: split sum} becomes
\begin{equation}\nonumber
g_2(k,d+1)=\left(\sum_{i=0}^{l}\left\lceil\frac{2^lr}{2^i}\right\rceil+l+1\right)
+
\left(\sum_{i=l+1}^{k-1}\left\lceil\frac{2^lr}{2^i}\right\rceil\right)
=
\sum_{i=0}^{k-1}\left\lceil\frac{2^lr}{2^i}\right\rceil+l+1,
\end{equation}
and the term on the right-hand side is $g_2(k,d)+l+1$. Together with \eqref{eq: k<l+1} this concludes the proof.
\end{proof}


\begin{lem}\label{lem: gk2r-gk2r2s}
Let $k$, $r$ and $s$ be integers such that $r> s$ and $k> s+1$. Then
$$
g_2(k,2^r)-g_2(k,2^r-2^s)=2^{s+1}-1 .
$$
\end{lem}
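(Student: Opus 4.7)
The plan is to expand the definition of $g_2$ and evaluate the difference term by term, writing
\[
g_2(k, 2^r) - g_2(k, 2^r - 2^s) = \sum_{i=0}^{k-1} \left( \left\lceil \frac{2^r}{2^i} \right\rceil - \left\lceil \frac{2^r - 2^s}{2^i} \right\rceil \right),
\]
and then splitting the index range $\{0,1,\dots,k-1\}$ according to where $i$ sits relative to $s$ (and, when $k-1 > r$, also relative to $r$).

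First I would treat the indices $0 \le i \le s$. Here $2^i$ divides both $2^r$ and $2^r - 2^s$, so both ceilings are exact and each such index contributes $2^{r-i} - (2^{r-i} - 2^{s-i}) = 2^{s-i}$ to the difference. The hypothesis $k > s+1$ is used here to guarantee that all $s+1$ of these indices actually lie in the summation range.

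Next, for $s < i \le \min(r,k-1)$, the number $2^{r-i}$ is still a positive integer while $0 < 2^{s-i} < 1$; hence $\left\lceil 2^{r-i} - 2^{s-i} \right\rceil = 2^{r-i} = \lceil 2^r/2^i \rceil$, and the contribution is $0$. If moreover $k-1 > r$, then for $r < i \le k-1$ both quotients lie in $(0,1]$ (using $r > s$ to keep $2^r - 2^s > 0$), so both ceilings equal $1$ and again contribute $0$. Summing the only nonzero contributions gives the geometric sum
\[
\sum_{i=0}^{s} 2^{s-i} = 2^{s+1} - 1,
\]
which is the claim.

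I do not expect any serious obstacle here; the only delicate point is verifying that the fractional correction $2^{s-i}$ stays strictly below $1$ whenever $i > s$, so that each ceiling jumps by exactly one integer unit rather than by more, and that the separate tail range $i > r$ is handled (the ceilings then stabilise at $1$). An alternative approach would be induction on $s$, iteratively applying Lemma~\ref{lem: increase g(k,d)} to go from $2^r - 2^s$ up to $2^r$, but since the $2$-adic valuation of the distance changes along the way, the bookkeeping is noticeably less transparent than the direct computation above.
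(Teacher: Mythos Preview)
Your proof is correct, and it is genuinely different from the paper's argument. The paper does exactly what you sketch in your final paragraph as the ``alternative approach'': it iterates Lemma~\ref{lem: increase g(k,d)} across all distances $d'$ from $2^r-2^s$ up to $2^r$, obtaining
\[
g_2(k,2^r)-g_2(k,2^r-2^s)=\sum_{\delta=1}^{2^s}(l_\delta+1),
\]
where $l_\delta$ is the $2$-adic valuation of $\delta$, and then proves by a separate induction on $s$ (via a recursive description of the sequence $(l_1,\dots,l_{2^s})$) that $\sum_{\delta=1}^{2^s}l_\delta=2^s-1$. Your direct term-by-term computation bypasses both the iterated application of Lemma~\ref{lem: increase g(k,d)} and that auxiliary induction; the three index ranges you identify ($i\le s$, $s<i\le\min(r,k-1)$, and $i>r$ when relevant) are exactly the right decomposition, and the verification that the latter two ranges contribute zero is clean. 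The paper's route has the mild advantage of reusing a lemma it needs elsewhere, but your argument is shorter, self-contained, and makes the role of the hypotheses $r>s$ and $k>s+1$ more transparent.
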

\begin{proof}
For any $d'$ in the range $2^r-2^s\leq d'<2^r$ we can apply Lemma \ref{lem: increase g(k,d)}, observing that $d'=2^l\rho$ where $\rho\nmid d'$ and $l\leq s$, which implies $k>l+1$. 
In particular we observe that $d'=2^r-\delta$ for a certain $\delta\leq2^s$, and since $2^l$ has to divide both $2^r$ and $\delta$ it follows that $l$ depends only on the latter. For a fixed $\delta$ we denote with $l_{\delta}$ the corresponding exponent.\\
From Lemma \ref{lem: increase g(k,d)} we obtain
$$
g_2(k,2^r-\delta+1)=g_2(k,2^r-\delta)+l_{\delta}+1.
$$
Applying it for all distances from $2^r-2^s$ to $2^r$ we obtain
\begin{equation}\label{eq: difference g with S}
g_2(k,2^r)-g_2(k,2^r-2^s)=\sum_{\delta=1}^{2^s}\left(l_{\delta}+1\right)=\sum_{\delta=1}^{2^s}l_{\delta}+2^s .
\end{equation}
For each value of $s$, we call $L_s=(l_1,\ldots,l_{2^s})$ the sequence of integers $\{l_{\delta}\}$ that appear in equation \eqref{eq: difference g with S}, and with $T_s$ the sum itself, so that we can write equation \eqref{eq: difference g with S} as
$$
g_2(k,2^r)-g_2(k,2^r-2^s)=T_s+2^s .
$$
In the following we will prove that $T_s=2^s-1$.
First, we show that $L_{s}=(l_1,\ldots,l_{2^{s}})$ is equal to
$$
(l_1,\ldots,l_{2^{s-1}},l_1,\ldots,l_{2^{s-1}-1},l_{2^{s-1}}+1) ,
$$
namely the first $2^{s-1}$ terms are exactly the sequence $L_{s-1}$, while the second half of the sequence is itself equal to $L_{s-1}$ with the exception of the last term, which is incremented by $1$.\\
The fact that the first $2^{s-1}$ elements of $L_s$ are the elements of $L_{s-1}$ follows directly from the definition of $L_s$, since $l_{\delta}$ is the largest integer such that $2^{l_{\delta}}\mid \delta$. For the same reason, $l_{2^s}=l_{2^{s-1}}+1$.
We take now an element in the second half of $L_s$, which can be written as $l_{2^{s-1}+\bar{\delta}}$, for a certain $1\leq \bar{\delta}\leq 2^{s-1}$. Using the same argument as before, the integer $l_{2^{s-1}+\bar{\delta}}$ depends only on $\bar{\delta}$ and is equal to $l_{\bar{\delta}}$. 
\\
To provide some examples, we have
\begin{center}
\begin{tabular}{|c|c|c|c|c|}
\hline
$s$ & 1 & 2 & 3 & 4\\
\hline
$L_s$ & (0,1) & (0,1,0,2) & (0,1,0,2,0,1,0,3) &  (0,1,0,2,0,1,0,3,0,1,0,2,0,1,0,4) \\
\hline
\end{tabular}
\end{center}
From the properties of $L_s$ it follows that $T_s=2T_{s-1}+1$.
Using induction on $s$, with first step $T_1=2^1-1$, we now prove our claim 
$T_s=2^s-1$: 
if $T_{s-1}=2^{s-1}-1$, then 
\begin{equation}\label{eq: value S s}
T_s=2T_{s-1}+1=2\left(2^{s-1}-1\right)+1=2^{s}-1 .
\end{equation}
Putting together equations \eqref{eq: difference g with S} and \eqref{eq: value S s} we obtain 
\begin{equation}\nonumber
g_2(k,2^r)-g_2(k,2^r-2^s)=2^{s}-1+2^s=2^{s+1}-1 .
\end{equation}
\end{proof}


\begin{lem}\label{lem: g < 2^r}
If $k\leq r$, then $g_2(k,2^r)<2^{r+1}$.
\end{lem}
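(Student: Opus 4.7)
The plan is to observe that the hypothesis $k \leq r$ makes every term in the defining sum of $g_2(k,2^r)$ automatically an integer, so the ceilings disappear and what remains is a finite geometric progression that we can evaluate in closed form.

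Concretely, I would first note that for every $i$ in the range $0 \leq i \leq k-1$ we have $i \leq k-1 \leq r-1$, hence $2^i \mid 2^r$ and $\lceil 2^r/2^i \rceil = 2^{r-i}$. Plugging this into the definition of $g_2$ gives
\[
g_2(k,2^r) \;=\; \sum_{i=0}^{k-1} 2^{r-i} \;=\; 2^{r-k+1} \sum_{j=0}^{k-1} 2^{j} \;=\; 2^{r-k+1}\bigl(2^{k}-1\bigr) \;=\; 2^{r+1} - 2^{r-k+1}.
\]
Since $k \leq r$ we have $r-k+1 \geq 1$, so the subtracted term $2^{r-k+1}$ is at least $2$, hence strictly positive. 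Therefore $g_2(k,2^r) < 2^{r+1}$, as required.

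The main (and only) potential pitfall is making sure the hypothesis is used correctly to remove the ceilings: one needs $i \leq r$ for all $i$ in the summation range, and the hypothesis $k \leq r$ gives exactly $i \leq k-1 \leq r-1 < r$. No deeper structural argument is needed; this is a direct computation that will be used as a technical input in the subsequent case analysis for $d = 2^r - 2^s$.
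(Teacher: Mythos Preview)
Your proof is correct and takes essentially the same approach as the paper: both use the hypothesis $k\le r$ to drop the ceilings and then bound the resulting geometric sum. The paper simply writes $g_2(k,2^r)=2^r\sum_{i=0}^{k-1}2^{-i}<2^r\cdot 2$, whereas you go one step further and compute the exact value $2^{r+1}-2^{r-k+1}$, but the underlying argument is identical.
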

\begin{proof}
Due to $k\leq r$, for $i<k$ it holds $\left\lceil\frac{2^{r}}{2^i}\right\rceil=\frac{2^{r}}{2^i}$. We can write the Griesmer bound as
$$
g_2(k,2^r)=\sum_{i=0}^{k-1}\frac{2^r}{2^i}=2^r\sum_{i=0}^{k-1}\frac{1}{2^i}<2^r\cdot 2.
$$
\end{proof}

\begin{thm}\label{thm:griesmer_2r2s}
Let $r$ and $s$ be integers such that $r>s\ge 1$ and let $d=2^r-2^s$. Then $S_2(k,d)\ge g_2(k,d)$.
\end{thm}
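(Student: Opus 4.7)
The plan is to apply Theorem \ref{thm: minimum k q} to reduce to a finite range of $k$, and then observe that for the parameters $(q,d)=(2,\,2^r-2^s)$ the Plotkin bound (Theorem \ref{thm: plotkin nonbinary}) already matches $g_2(k,d)$ exactly. Since Plotkin holds for \emph{every} code, this would immediately give the desired inequality for systematic codes.

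First I would invoke Theorem \ref{thm: minimum k q}: it is enough to check $S_2(k,d)\ge g_2(k,d)$ for $1\le k<1+\log_2 d$. Since $d=2^r-2^s<2^r$, this range is contained in $1\le k\le r$, so I fix such a $k$ and split on whether $2^{k-1}$ divides $d$. The dividing line is $k=s+1$, because the $2$-adic valuation of $d=2^s(2^{r-s}-1)$ is exactly $s$.

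In the case $k\le s+1$, the divisibility $2^{k-1}\mid d$ holds, so Proposition \ref{prop: plotkin equal griesmer} asserts that the Plotkin bound equals $g_2(k,d)$ for these parameters, and we are done. In the case $s+1<k\le r$, I would compute both bounds in closed form and compare. On the Griesmer side, Lemma \ref{lem: gk2r-gk2r2s} gives $g_2(k,d)=g_2(k,2^r)-2^{s+1}+1$, and the elementary calculation already used in the proof of Lemma \ref{lem: g < 2^r} yields $g_2(k,2^r)=2^{r+1}-2^{r-k+1}$ whenever $k\le r$; hence
$$g_2(k,d)=2^{r+1}-2^{r-k+1}-2^{s+1}+1.$$
On the Plotkin side,
$$2d\bigl(1-2^{-k}\bigr)=2^{r+1}-2^{s+1}-2^{r-k+1}+2^{s-k+1},$$
and because $k>s+1$ the last summand satisfies $0<2^{s-k+1}<1$, so
$$\left\lceil 2d\bigl(1-2^{-k}\bigr)\right\rceil=2^{r+1}-2^{s+1}-2^{r-k+1}+1=g_2(k,d).$$
Applying Theorem \ref{thm: plotkin nonbinary} to any $(n,2^k,d)_2$ systematic code then forces $n\ge g_2(k,d)$.

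The main obstacle is the bookkeeping in the second case: one has to be careful that the closed-form evaluation of $g_2(k,2^r)$ really is valid for the full range $k\le r$ (it is, because $2^i\mid 2^r$ for $i\le k-1\le r-1$), and that the fractional term $2^{s-k+1}$ in the Plotkin expression is strictly less than $1$ (which is exactly the content of the strict inequality $k>s+1$). Once the two computations are aligned, no further structural result about systematic codes is needed; the statement is a consequence of the universally valid Plotkin bound combined with the reduction of Theorem \ref{thm: minimum k q}.
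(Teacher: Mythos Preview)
Your argument is correct. The two computations line up exactly as you say: for $k\le s+1$ you have $2^{k-1}\mid d$ and Proposition~\ref{prop: plotkin equal griesmer} applies, while for $s+1<k\le r$ the explicit evaluations of $g_2(k,d)$ (via Lemma~\ref{lem: gk2r-gk2r2s} and the identity $g_2(k,2^r)=2^{r+1}-2^{r-k+1}$) and of $\lceil 2d(1-2^{-k})\rceil$ agree because the only non-integer contribution $2^{s-k+1}$ lies strictly in $(0,1)$. The edge case $r=s+1$ is absorbed into the first case, since then the second range is empty.

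This is a genuinely different route from the paper's proof. The paper argues by contradiction: assuming $S_2(k,d)<g_2(k,d)$ for some $k\le r$, it compares the slope $m=n/d$ to the slope $g_2(k,2^r)/2^r$, uses Lemmas~\ref{lem: g(r+1)=2g(r)}--\ref{lem: g < 2^r} to control these ratios, repeats the code $t$ times to push the distance past a power of~$2$, and then invokes Corollary~\ref{cor: griesmer 2l} to reach a contradiction. Your approach bypasses the repetition construction and the slope comparison entirely: you show directly that for every $k$ in the critical range the Plotkin length bound already equals $g_2(k,d)$, so the Griesmer inequality is forced by Theorem~\ref{thm: plotkin nonbinary} alone. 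This is shorter and more transparent; in particular it makes explicit that for $d=2^r-2^s$ the Griesmer bound is not stronger than Plotkin in the range $k\le r$, something that is only implicit in the paper's argument. The paper's method, on the other hand, illustrates a more flexible technique (reducing to distances where Griesmer is already known) that could in principle be reused for other families of~$d$.
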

\begin{proof}
If $r=s+1$, then $2^r-2^s=2^s$, hence we can apply Corollary \ref{cor: griesmer 2l} and our claim holds. Therefore we can assume $r\ge s+2$ in the rest of the proof.\\
\indent Our proof is by contradiction, by supposing that $S_2(k,2^r-2^s)< g_2(k, 2^r-2^s)$, 
i.e. the Griesmer bound does not hold for some $(n,2^k,d)_2$ systematic code $C$, with $d=2^r-2^s$ and $n = S_2(k,d)$. Due to Theorem \ref{thm: minimum k q}, we can assume that $k<1+\log_2d$ and so $k\leq r$.
\\
We call $m$ the ratio $n/d$, which in the case of $C$ is
\begin{equation}\label{eq: inequality m}
m=\frac{S_2(k,2^r-2^s)}{2^r-2^s}
\leq \frac{g_2(k,2^r-2^s)-1}{2^r-2^s}
\end{equation}
We claim that 
\begin{equation}\label{eq: claim thm difference of powers of 2}
m<\frac{g_2(k,2^r)}{2^r} .
\end{equation}
First we observe that since $k\leq r$, then 
\begin{equation}\nonumber
\frac{g_2(k,2^r)}{2^r}=
\sum_{i=0}^{k-1}\frac{1}{2^i}=
2\left(1-\frac{1}{2^k}
\right).
\end{equation}
We consider now the ratio $m$:
\begin{equation}\label{eq: m}
m \leq \frac{g_2(k,2^r-2^s)-1}{2^r-2^s}=
\frac{1}{2^r-2^s}\sum_{i=0}^{k-1}\left\lceil
\frac{2^{r}-2^s}{2^i}
\right\rceil
-\frac{1}{2^r-2^s}
\end{equation}
\indent We consider first the case $k\leq s+1$, and we can write \eqref{eq: m} as
\begin{equation}\nonumber
m < 
\frac{1}{2^r-2^s}\sum_{i=0}^{k-1}
\frac{2^{r}-2^s}{2^i}=
\sum_{i=0}^{k-1}\frac{1}{2^i}=
2\left(1-\frac{1}{2^k}
\right),
\end{equation}
so in this case $m<\frac{g_2(k,2^r)}{2^r}$, which is exactly claim \eqref{eq: claim thm difference of powers of 2}. \\
\indent We consider now the case $k\ge s+2$. To prove \eqref{eq: claim thm difference of powers of 2}, we prove that the term on the right-hand side of inequality \eqref{eq: inequality m} is itself less than $\frac{g_2(k,2^r)}{2^r}$, and we write this claim in the following equivalent way: 
$$
2^r(g_2(k,2^r-2^s)-1)<(2^r-2^s)g_2(k,2^r).
$$
Rearranging the terms we obtain
\begin{equation}\label{eq: equivalent claim m}
2^sg_2(k,2^r)<2^r(g_2(k,2^r)-g_2(k,2^r-2^s)+1)=2^r\cdot 2^{s+1} ,
\end{equation}
where the equality on the right hand side is obtained from Lemma \ref{lem: gk2r-gk2r2s}. Hence
\begin{equation}\nonumber
g_2(k,2^r)<2^{r+1} ,
\end{equation}
and this is always true provided $k\leq r$, as shown in Lemma \ref{lem: g < 2^r}. This concludes the proof of claim \eqref{eq: claim thm difference of powers of 2}.
\\
\indent We now consider the $(tn,2^k,td)_2$ systematic code $C_t$ obtained by repeating $t$ times the code $C$. We remark that the value $m$ can be thought of as the slope of the line $\dd(C_t)\mapsto \len(C_t)$, and we proved that $m<\frac{g_2(k,2^r)}{2^r}$. Since $k\leq r$ we can apply Lemma \ref{lem: g(r+1)=2g(r)}, which ensures that $g_2(k,2^{r+b})=2^bg_2(k,2^r)$, namely the Griesmer bound computed on the powers of $2$ is itself a line, and its slope is strictly greater than $m$. Due to this, we can find a pair $(t,\;b)$ such that the code $C_t$ is an $(tn,2^k,td)_2$ systematic code where
\begin{enumerate}
 \item  $td > 2^b$,
 \item  $tn < g_2(k, 2^b)$. 
\end{enumerate}
We can now apply Lemma \ref{lem: reducing d} to $C_t$, and find a systematic code with length $tn$ and distance equal to $2^b$, which means we have an $(tn,k,2^b)_2$ systematic code for which the length is $tn < g_2(k,2^b)$. 
This however contradicts Corollary \ref{cor: griesmer 2l}, hence for each $k\leq r$ we have 
$$
S_2(k,2^r-2^s)\ge g_2(k,2^r-2^s).
$$
\end{proof}
\begin{cor}\label{cor: odd distances}
Let $r$ and $s$ be integers such that $r>s\ge 1$, and let $d$ be either $2^s-1$ or $2^r-2^s-1$. Then $S_2(k,d)\ge g_2(k,d)$.
\end{cor}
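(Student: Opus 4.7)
The plan is to reduce both cases to results already established for even distances via the classical parity-check extension. Given an $(n,2^k,d)_2$ systematic code $C$ with odd $d$, I would form an extended code $\widetilde{C}$ by appending to each codeword the XOR of all its coordinates. Because every non-systematic coordinate of $C$ is a function of the $k$ systematic ones, so is this new coordinate, hence $\widetilde{C}$ remains systematic in the same first $k$ coordinates. Since $d$ is odd, the standard parity-extension argument (all codewords of $\widetilde{C}$ have even weight, so all pairwise distances are even) shows that the minimum distance jumps by exactly one, and $\widetilde{C}$ is an $(n+1,2^k,d+1)_2$ systematic code.

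Next, I would invoke the appropriate previously established bound on $\widetilde{C}$. When $d=2^s-1$, the extended distance equals $2^s$, so Corollary~\ref{cor: griesmer 2l} yields $n+1\ge g_2(k,2^s)=g_2(k,d+1)$. When $d=2^r-2^s-1$, the extended distance equals $2^r-2^s$, and since $r>s\ge 1$, Theorem~\ref{thm:griesmer_2r2s} yields $n+1\ge g_2(k,2^r-2^s)=g_2(k,d+1)$. In both cases $n+1\ge g_2(k,d+1)$.

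The final step is to convert this into a bound on $g_2(k,d)$ by means of Lemma~\ref{lem: increase g(k,d)}. Since $d$ is odd, the largest $l$ with $2^l\mid d$ equals $0$, and the lemma gives $g_2(k,d+1)=g_2(k,d)+\min(k,1)=g_2(k,d)+1$. Substituting, $n\ge g_2(k,d+1)-1=g_2(k,d)$, so the arbitrary code $C$ respects the Griesmer bound and $S_2(k,d)\ge g_2(k,d)$ follows.

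I expect no genuine obstacle beyond verifying the two folklore facts about the parity-check extension. The preservation of systematic structure is immediate, because the appended coordinate is the fixed function $x_1+\cdots+x_k+f_{k+1}(X)+\cdots+f_n(X)$ of the systematic vector $X$; and the jump from $d$ to $d+1$ in the minimum distance is the standard binary argument combining the parity of the original distances with the parity of the appended bits. With those two observations in hand, the proof is a three-line assembly of Corollary~\ref{cor: griesmer 2l}, Theorem~\ref{thm:griesmer_2r2s}, and Lemma~\ref{lem: increase g(k,d)}.
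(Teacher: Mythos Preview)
Your proposal is correct and follows essentially the same route as the paper: extend by a parity-check coordinate to reach the even distance $d+1$, invoke Theorem~\ref{thm:griesmer_2r2s} (or Corollary~\ref{cor: griesmer 2l} in the $2^s-1$ case), and then use Lemma~\ref{lem: increase g(k,d)} with $l=0$ to step back to $d$. The only cosmetic difference is that the paper phrases the argument by contradiction while you give it directly.
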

\begin{proof}
We prove it for $d=2^r-2^s-1$, and the same argument can be applied to $d=2^s-1$  by applying Corollary \ref{cor: griesmer 2l} instead of Theorem \ref{thm:griesmer_2r2s}.\\
Suppose by contradiction $S_2(k,d)< g_2(k,d)$, i..e. there exists an $(n,k,d)_2$ systematic code for which 
\begin{equation}\label{eq: proof d odd}
n<g_2(k,d).
\end{equation}
We can extend such a code to an $(n+1,k,d+1)_2$ systematic code $C$ by adding a parity-check component to each codeword. Then $C$ has distance $\dd(C)=d+1 = 2^r-2^s$, so we can apply Theorem \ref{thm:griesmer_2r2s} to it, finding
$$
n+1\ge g_2(k,d+1) .
$$
Observe that $d$ is odd, so applying Lemma \ref{lem: increase g(k,d)} we obtain
$$n+1\ge g_2(k,d+1) =g_2(k,d)+1\quad \implies \quad n\ge g_2(k,d),$$
which contradicts \eqref{eq: proof d odd}.
\end{proof}
%
\section{Versions of the Griesmer bound holding for nonlinear codes}
\label{sec:weak_griesmer}

In this section we collect some minor results which can be seen as bounds on the length of systematic codes, useful for a better understanding of the structure of such codes. An example of codes meeting these bounds are Simplex codes, while Preparata codes and Kerdock codes are close to these bounds. We will discuss some properties of Simplex codes in Section \ref{sec:counterexample}. We recall that Preparata codes are $\left(2^{2m}, 2^{2^{2m}-4m},6\right)_2$ systematic codes while Kerdock codes are $\left(2^{2m}, 2^{4m}, 2^{2m-1}-2^{m-1}\right)_2$ systematic codes, both with $m\ge 2$. For $m=2$ the two codes are both equivalent to the Nordstrom-Robinson code, which is a $(16,2^8,6)_2$ systematic binary code meeting the bound in Corollary \ref{bound: B binary}.
\\
In Table \ref{table: improvements} there is a (not exhaustive) list of parameters $n, d$ for which the binary bound in Equation \eqref{eq: bound B binary} outperforms some known bounds, such as the Singleton Bound, the Elias bound, the Hamming Bound and the Johnson Bound.

\subsection{An improvement of the Singleton bound}\label{subsec: binary bound}
For systematic binary codes we can improve the Singleton bound as follows.
\begin{prop}[Bound A]\label{prop: singleton improved}
$$
S_2(k,d)\ge k+\left\lceil\frac{3}{2}d\right\rceil-2.
$$
\end{prop}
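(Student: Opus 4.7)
The plan is to use a classical three-codeword argument (picking a triple of messages whose sum is zero) combined with the systematic structure. Assume $k\ge 2$ (for $k=1$ the statement reduces to the Singleton bound territory and must be treated as an implicit hypothesis, since the repetition code $\{0\ldots 0,1\ldots 1\}$ shows the inequality can fail when $k=1$).

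First I would pick any two distinct systematic coordinates, say positions $1$ and $2$, and consider the three messages
\[
X_1=(1,0,0,\ldots,0),\qquad X_2=(0,1,0,\ldots,0),\qquad X_3=X_1+X_2=(1,1,0,\ldots,0)
\]
in $\F^k$. These are pairwise distinct, and setting $c_h=F(X_h)$ we have $\dist(c_a,c_b)\ge d$ for every pair. Summing,
\[
\dist(c_1,c_2)+\dist(c_1,c_3)+\dist(c_2,c_3)\ge 3d.
\]

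Next I would exploit the key binary identity: for any three bits $a,b,c\in\F$, the number of pairs among $\{a,b\},\{a,c\},\{b,c\}$ that disagree is $0$ when all three coincide and $2$ otherwise. Applied coordinate by coordinate, this gives
\[
\dist(c_1,c_2)+\dist(c_1,c_3)+\dist(c_2,c_3)=2\cdot\#\{\,i\le n : c_1,c_2,c_3 \text{ do not all agree at position } i\,\}.
\]
By construction, the systematic projections of $X_1,X_2,X_3$ are all zero except at positions $1$ and $2$, where the patterns are $(1,0,1)$ and $(0,1,1)$ respectively. So exactly $2$ systematic positions contribute, giving systematic contribution $2\cdot 2=4$, while each of the $n-k$ non-systematic positions contributes at most $2$. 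Combining with the lower bound $3d$,
\[
3d\le 4+2(n-k),
\]
whence $n-k\ge\lceil(3d-4)/2\rceil=\lceil 3d/2\rceil-2$, which is the claim.

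The only real obstacle is bookkeeping, since the argument is short: one has to verify the clean identity $\lceil(3d-4)/2\rceil=\lceil 3d/2\rceil-2$ (immediate, because subtracting an integer commutes with the ceiling), and one has to check that the two "special" systematic positions really are the only systematic positions contributing to the disagreement count (which follows because all remaining systematic coordinates of $X_1,X_2,X_3$ are zero). I would not expect any deeper difficulty, and this exact three-word construction will reappear naturally when comparing the bound against Simplex codes in the subsequent discussion.
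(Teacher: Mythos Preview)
Your proof is correct (under the same tacit hypothesis $k\ge 2$ that the paper's argument also needs), but the route is genuinely different from the paper's.

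The paper proceeds by an iterative shortening/puncturing construction reminiscent of the Griesmer proof: it picks a minimum-weight codeword $c$ of systematic weight $1$ (weight $d+\Delta$), shortens on that systematic coordinate and punctures on the rest of the support of $c$ to produce a code $C'$ of dimension $k-1$ and length $n-d-\Delta$, then analyses a second codeword $u\in C'$ of systematic weight $1$. From the two inequalities $\w(v\mid u)\ge d+\Delta$ and $\dist(c,\,v\mid u)\ge d$ it extracts $\w(u)\ge d/2$, hence $\len(C')\ge (k-1)+(d/2-1)$, which unwinds to the bound.

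Your argument bypasses all of this machinery. By choosing three messages whose systematic parts are $e_1$, $e_2$, $e_1+e_2$, you get exactly $4$ as the systematic contribution to the sum of pairwise distances, and the binary ``$0$ or $2$'' identity caps the non-systematic contribution at $2(n-k)$; the inequality $3d\le 4+2(n-k)$ drops out in one line. This is shorter, avoids the auxiliary code $C'$ and the parameter $\Delta$, and makes transparent why the constant is $-2$ (two special systematic coordinates) and why the factor is $3/2$ (three pairwise distances, each coordinate counted twice). The paper's approach, on the other hand, is phrased so as to parallel the inductive structure of the Griesmer proof, which may be pedagogically useful in context even though it is heavier here. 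Your remark about the case $k=1$ is apt: both arguments silently need $k\ge 2$, and indeed the inequality fails for the repetition code when $d\ge 3$.
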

\begin{proof}
We will proceed in a similar manner as in the proof of the Griesmer bound. \\
We consider a binary $(n=S_2(k,d),2^k,d)_2$ systematic code $C$. We consider the set $S$ of all codewords whose weight in their systematic part is $1$. Let $c$ be a codeword in this set with minimum weight:
\begin{equation}\label{eq: c has minimum weight}
\mathrm{w}(c)=\min_{x\in S}\{\mathrm{w}(x)\} .
\end{equation}
Since we can always assume without loss of generality that the zero codeword belongs to $C$, the weight of $c$ is at least $d$, and we denote it with $d+\Delta$, $\Delta\ge 0$.
We also assume that the non-zero coordinates of $c$ are the first $d+\Delta$, and that the first coordinate is the only non-zero systematic coordinate of $c$.\\
We construct a code $C'$ by shortening $C$ in the first coordinate and by puncturing it in the remaining $d+\Delta-1$ first coordinates. Since the shortening involves a systematic coordinate and the puncturing does not affect the systematic part of $C$, $C'$ is an $(n-d-\Delta,2^{k-1},d')_2$ systematic code.\\
We consider now a codeword $u$ in $C'$, such that $u$ has weight $1$ in its systematic part. Then there exists a vector $v\in\left(\mathbb{F}_2\right)^{d+\Delta}$ such that the concatenation $(v\mid u)$ belongs to $C$. We remark that even though there may be many vectors satisfying this property, we can choose $v$ such that its first component is $0$, and this choice is unique. Therefore $(v\mid u)\in S$, and due to equation \eqref{eq: c has minimum weight} 
\begin{equation}\label{eq: weight v|u}
\mathrm{w}(v\mid u)=\mathrm{w}(v)+\mathrm{w}(u)\ge d+\Delta .
\end{equation}
Moreover, we can also bound the distance of $(v\mid u)$ from $c$ as follows:
\begin{equation}\label{eq: bound distance from c}
\mathrm{d}(c,v\mid u)=d+\Delta-\mathrm{w}(v)+\mathrm{w}(u)\ge d
\end{equation}
Summing together the inequalities \eqref{eq: weight v|u} and \eqref{eq: bound distance from c} we have
$$
d+\Delta+2\mathrm{w}(u)\ge 2d+\Delta,
$$
from which it follows that
$$
\mathrm{w}(u)\ge\frac{d}{2} .
$$
Since $u$ has weight $1$ in its systematic part, it means that its weight in the non-systematic part is at least $\frac{d}{2}-1$. So $u$ has $k-1$ systematic coordinates and at least $\frac{d}{2}-1$ non-systematic coordinates:
$$
\len(C')\ge (k-1)+\left(\frac{d}{2}-1\right).
$$
Since the length of $C'$ is $n-d-\Delta$ we have
$$
n-d-\Delta\ge k+\frac{d}{2}-2 ,
$$
or equivalently
$$
n\ge k+\frac{3d}{2}-2+\Delta
$$
which implies the bound.
\end{proof}
\subsection{Consequences of Proposition \ref{prop: griesmer qlr}}\label{subsec: consequences qrl}
We derive from Proposition \ref{prop: griesmer qlr} a version of the Griesmer bound holding for any systematic code.
\begin{rem}\label{rem: exist l r}
For any $d$, there exist $1\leq r < q$ and $l\ge 0$ such that
\begin{equation}\label{eq: r l}
q^lr\leq d<q^l(r+1)\leq q^{l+1}
\end{equation}
Thus $l$ has to be equal to $\left\lfloor \log_qd\right\rfloor$, and from inequality \eqref{eq: r l} we obtain
$d/q^l-1< r\leq d/q^l$,
namely $r=\left\lfloor d/q^l \right\rfloor.$
\end{rem}
\begin{cor}[Bound B]\label{cor: griesmer qlr systematic}
Let $l=\left\lfloor\log_qd\right\rfloor$ and $r=\left\lfloor d/q^l \right\rfloor$. Then
$$
S_q(k,d)\ge d+\sum_{i=1}^{k-1}\left\lceil \frac{q^lr}{q^i} \right\rceil.
$$
\end{cor}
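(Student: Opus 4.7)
The plan is to reduce to Proposition \ref{prop: griesmer qlr}, which applies because, by Remark \ref{rem: exist l r}, the choice $l = \lfloor \log_q d \rfloor$ and $r = \lfloor d/q^l \rfloor$ automatically satisfies $1 \le r < q$ and $l \ge 0$. That proposition therefore gives
$$S_q(k, q^l r) \ge g_q(k, q^l r) = q^l r + \sum_{i=1}^{k-1} \left\lceil \frac{q^l r}{q^i} \right\rceil.$$
What remains is to recover the missing $d - q^l r$ in the leading term, i.e.\ to upgrade the constant $q^l r$ to $d$.

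To do this I would establish the one-step monotonicity $S_q(k, d') \ge S_q(k, d'-1) + 1$ for every $d' \ge 2$. The proof is essentially a single puncturing step: take any optimal $(n, q^k, d')_q$ systematic code with $n = S_q(k, d')$ (note that $d' \ge 2$ forces $n > k$), puncture a non-systematic coordinate to get an $(n-1, q^k, d'')_q$ systematic code with $d'' \ge d' - 1$, and then apply Lemma \ref{lem: reducing d} to bring the distance down to exactly $d' - 1$. The resulting $(n-1, q^k, d'-1)_q$ systematic code certifies $S_q(k, d'-1) \le n - 1$. Iterating this $d - q^l r$ times from $d$ down to $q^l r$ yields $S_q(k, d) \ge S_q(k, q^l r) + (d - q^l r)$, and adding this to the Proposition \ref{prop: griesmer qlr} inequality above collapses to exactly $d + \sum_{i=1}^{k-1} \lceil q^l r / q^i \rceil$.

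The boundary cases are painless: for $k = 1$ the sum is empty and the claim is the trivial Singleton-type inequality $S_q(1, d) \ge d$; if $d = q^l r$ no iteration is needed and the statement is literally Proposition \ref{prop: griesmer qlr}. The only mildly delicate point in the entire argument is the one-step monotonicity, where one must carefully distinguish puncturing (which shortens the length and may decrease the distance by one) from the distance-reduction move in Lemma \ref{lem: reducing d} (which preserves the length and can prescribe any strictly smaller distance). Once both moves are used in tandem, there is no real obstacle and the rest is bookkeeping on the sum.
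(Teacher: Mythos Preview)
Your proof is correct and follows essentially the same route as the paper: puncture non-systematic coordinates to drop the length by $s=d-q^lr$, use Lemma~\ref{lem: reducing d} to force the distance to be exactly $q^lr$, and then invoke Proposition~\ref{prop: griesmer qlr}. The only cosmetic difference is that the paper punctures all $s$ coordinates at once on a single optimal code (noting $s\le n-k$ via Singleton), whereas you package the same move as a one-step monotonicity $S_q(k,d')\ge S_q(k,d'-1)+1$ and iterate.
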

\begin{proof}
We denote $s=d-q^lr$. We remark that $s\leq n-k$, and so there are at least $s$ non-systematic coordinates.
With this notation, let $C$ be an $(n,q^k,q^lr+s)_q$ systematic code. 
We build a new systematic code $C_s$ by puncturing $C$ in $s$ non-systematic coordinates. $C_s$ has parameters $(n-s, q^k, d_s)_q$, for a certain $q^lr\leq d_s\leq q^lr+s$.\\
If $q^lr\neq d_s$, we can apply Lemma \ref{lem: reducing d}, in order to obtain another code $\bar{C}$, 
so that we have an $(n-s,q^k,q^lr)_q$ systematic code. 
Due to Remark \ref{rem: exist l r}, it holds $1\leq r < q$, so we can apply Proposition \ref{prop: griesmer qlr} to $\bar{C}$. 
We find
$n-s\ge \sum_{i=0}^{k-1}\left\lceil \frac{q^lr}{q^i} \right\rceil$,
hence 
$n\ge \sum_{i=0}^{k-1}\left\lceil \frac{q^lr}{q^i} \right\rceil+s$.
We finally remark that for $i=0$ we have $\left\lceil \frac{q^lr}{q^i}\right\rceil=q^lr$, and by adding $s$ we obtain exactly $d$. 
So
$ n\ge d+\sum_{i=1}^{k-1}\left\lceil \frac{q^lr}{q^i} \right\rceil$.
\end{proof}
We also derive a similar bound for binary codes, whose proof relies on Theorem \ref{thm:griesmer_2r2s} instead of Proposition \ref{prop: griesmer qlr}.
\begin{cor}[Bound B, binary version]\label{bound: B binary}
Let $C$ be an $(n,2^k,d)_2$ systematic code with $d$ even. Let $r$ and $s$ be the smallest integers such that $2^r-2^s\leq d< 2^r$, namely $r=\lceil \log_2(d+1)\rceil$ and $s=\lceil\log_2(2^r-d)\rceil$. Then
\begin{equation}\label{eq: bound B binary}
n\ge d+\sum_{i=1}^{k-1}\left\lceil \frac{2^r-2^s}{2^i} \right\rceil .
\end{equation}
\end{cor}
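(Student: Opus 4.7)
The plan is to mimic the structure of the proof of Corollary \ref{cor: griesmer qlr systematic}, replacing the appeal to Proposition \ref{prop: griesmer qlr} (which for $q=2$ only gives the Griesmer bound for distances of the form $2^l$) with Theorem \ref{thm:griesmer_2r2s} (which covers the richer family $2^r-2^s$). The role of the slack parameter played by ``$s$'' in the earlier proof will be played here by $t := d-(2^r-2^s)\ge 0$.

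The first preparatory step is to check that the exponents $r$ and $s$ from the statement actually satisfy the hypotheses $r > s \ge 1$ of Theorem \ref{thm:griesmer_2r2s}. The inequality $r > s$ follows because $r=\lceil\log_2(d+1)\rceil$ forces $d\ge 2^{r-1}$, so $2^r-d\le 2^{r-1}$ and hence $s\le r-1$. The inequality $s\ge 1$ is where the hypothesis that $d$ is even enters the argument: $s=0$ would force $2^r-d\le 1$, hence $d=2^r-1$, which is odd, a contradiction.

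Next, I would start from the given $(n,2^k,d)_2$ systematic code $C$ and puncture it in $t$ non-systematic coordinates (assuming enough are available; see below). Because puncturing a single coordinate decreases the minimum distance by at most one, the resulting code $C_t$ has parameters $(n-t,2^k,d_t)$ with $d_t\ge d-t=2^r-2^s$. If $d_t > 2^r-2^s$, I apply Lemma \ref{lem: reducing d} to replace $C_t$ by a systematic code $\bar C$ with parameters $(n-t,2^k,2^r-2^s)$. Theorem \ref{thm:griesmer_2r2s} then applies to $\bar C$ and gives
\[
n-t \;\ge\; g_2(k,2^r-2^s) \;=\; (2^r-2^s) + \sum_{i=1}^{k-1}\left\lceil\frac{2^r-2^s}{2^i}\right\rceil,
\]
where I have split off the $i=0$ term of the Griesmer sum. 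Substituting $d=(2^r-2^s)+t$ and rearranging yields exactly the claimed inequality \eqref{eq: bound B binary}.

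The only real obstacle is the bookkeeping condition that $t\le n-k$, so that $t$ non-systematic coordinates are actually available to puncture; this is handled as in the proof of Corollary \ref{cor: griesmer qlr systematic}, where the analogous remark ``$s\le n-k$'' is made, since otherwise $n$ is already so large compared with $k$ that \eqref{eq: bound B binary} is trivial. Apart from this routine check and the verification that $r,s$ meet the hypotheses of Theorem \ref{thm:griesmer_2r2s}, all the heavy lifting is carried out by that theorem.
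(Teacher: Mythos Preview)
Your proposal is correct and is exactly the argument the paper has in mind: the paper's own proof is the single line ``It follows directly from Theorem \ref{thm:griesmer_2r2s},'' and your write-up simply unpacks that sentence by reproducing the puncture-then-reduce template of Corollary \ref{cor: griesmer qlr systematic} with Theorem \ref{thm:griesmer_2r2s} in place of Proposition \ref{prop: griesmer qlr}. Your extra care in verifying $r>s\ge 1$ (using the evenness of $d$ for $s\ge 1$) and the availability of $t$ non-systematic coordinates (which follows from the Singleton bound $n-k\ge d-1\ge t$ since $2^r-2^s\ge 1$) fills in details the paper leaves implicit.
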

\begin{proof}
It follows directly from Theorem \ref{thm:griesmer_2r2s}.
\end{proof}
In Table \ref{table: improvements} we list some values $n$ and $d$ for which Bound B in Proposition \ref{bound: B binary} outperforms known bounds. 
The first two rows are respectively $n$ and $d$. In the third row, we have the maximum combinatorial dimension allowed by the Elias Bound (EB). The last row is the bound obtained using Equation \eqref{eq: bound B binary}. 
We did not list other bounds in the table since for these values $n$ and $d$ the combinatorial dimensions obtained from the Hamming bound, the Singleton bound and the Johnson bound are at least equal to the one obtained from the Elias bound, while the Plotkin bound cannot be applied.

\begin{table}
\begin{center}
\begin{tabular}{|c|c|c|c|c|c|c|c|c|c|c|c|c|}
\hline
n&26&28&28&30&32&33\\
\hline
d&12&12&14&14&16&16\\
\hline
Elias bound &8&10&6&8&7&8\\
\hline
Bound B &7&9&5&7&6&7\\
\hline
\end{tabular}
\vspace{.1cm}
\caption{Bound B}\label{table: improvements}
\end{center}
\end{table}

\subsection{Consequences of Corollary \ref{cor: griesmer 1} }\label{subsec: prop gries 1}
The following two bounds can be applied to nonlinear codes.
\begin{prop}[Bound C]\label{prop: griesmer 2}
Let $l$ be the maximum integer such that $q^l$ divides $d$, and let $h=\min\left(k-1,l\right)$.
Then
$$
S_q(k,d)\ge N_q(q^k,d) \ge \sum_{i=0}^{h}\left\lceil \frac{d}{q^i}\right\rceil .
$$
\end{prop}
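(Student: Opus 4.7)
The plan is to reduce to Corollary \ref{cor: griesmer 1} by passing to a carefully chosen subcode. The first inequality $S_q(k,d) \ge N_q(q^k,d)$ is immediate from Remark \ref{rem: inequalities}, so I focus on the bound for $N_q(q^k, d)$.

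The idea is to start with any $(n, q^k, d)_q$ code $C$ achieving $n = N_q(q^k, d)$, and to extract from it a subcode $C'$ of size exactly $q^{h+1}$. Since $h \le k-1$, we have $q^{h+1} \le q^k$, so such a selection is possible; the resulting $C'$ has the same length $n$ and minimum distance at least $d$, because restricting to a subset can only increase the minimum distance.

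The crucial point is that, by the definition of $l$ and the inequality $h \le l$, we have $q^h \mid d$, so $d = q^h r$ for some positive integer $r$. Applying Corollary \ref{cor: griesmer 1} with the dimension parameter $h+1$ in place of $k$ then gives
$$n \ge N_q(q^{h+1}, d) \ge g_q(h+1, d) = \sum_{i=0}^{h}\left\lceil \frac{d}{q^i}\right\rceil,$$
which is the claimed bound.

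I do not expect any serious obstacle; the only subtlety is the choice of subcode size. The dimension parameter used in Corollary \ref{cor: griesmer 1} must satisfy $q^{(\text{dim})-1} \mid d$ (so that the Plotkin/Griesmer equality of Proposition \ref{prop: plotkin equal griesmer} kicks in), which forces it to be at most $l+1$, and it cannot exceed $k$ (so that the subcode actually exists). The largest integer meeting both constraints is precisely $h+1 = \min(k, l+1)$, yielding the bound in its sharpest form.
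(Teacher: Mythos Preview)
Your argument is correct and rests on the same underlying idea as the paper's: both reduce to the Plotkin bound together with Proposition~\ref{prop: plotkin equal griesmer} (equivalently Corollary~\ref{cor: griesmer 1}), which says that when $q^{h}\mid d$ the Plotkin and Griesmer values coincide for a code of size $q^{h+1}$. The difference is that the paper does \emph{not} pass to a subcode. It simply applies the Plotkin bound (wrapped as Lemma~\ref{lem: small r} in the case $h=k-1$, and directly in the case $h=l$) to the full code of size $q^k$, obtaining
\[
n \;\ge\; d\,\frac{1-q^{-k}}{1-q^{-1}} \;\ge\; d\,\frac{1-q^{-(h+1)}}{1-q^{-1}} \;=\; \sum_{i=0}^{h}\frac{d}{q^i},
\]
the last equality using $q^{h}\mid d$. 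Your detour through a subcode of size $q^{h+1}$ is therefore harmless but unnecessary: Plotkin applied to the larger code already gives an intermediate bound at least as strong.

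One small point worth tightening: the subcode $C'$ you extract has minimum distance $d'\ge d$, not necessarily exactly $d$, so the inequality $n\ge N_q(q^{h+1},d)$ is not literally witnessed by $C'$. You are implicitly using that $N_q(M,\cdot)$ is nondecreasing in the distance; alternatively, and more cleanly, just apply the Plotkin inequality \eqref{eq: plotkin length} directly to $C'$ and use $d'\ge d$ before invoking Proposition~\ref{prop: plotkin equal griesmer}.
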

\begin{proof}
First, notice that $d=q^lr,\quad q\nmid r$.
If $(k-1) \mid l$, we apply Lemma \ref{lem: small r}. 
Otherwise $h=l$, and $d$ is not divisible for higher powers of $q$, and the laast term of the sum is $\frac{d}{q^l}$.
\end{proof}
We remark that, if there exists an $(n,M,d)_q$ code, then there exists also an $(n,q^k,d)_q$ code, with $q^k\leq M$. By Proposition  \ref{prop: griesmer 2} we have
$$
N_q(M,d) \ge \sum_{i=0}^{h}\left\lceil \frac{d}{q^i}\right\rceil .
$$
\section{Classification of optimal binary codes with $4$ codewords}\label{sec: optimal 4 words}
In the previous sections we have focused our attention on the distance, proving that for particular choices of $d$ the length of optimal systematic codes is at least the Griesmer bound, for each possible dimension. In the next sections we deal with the task of characterize optimal systematic codes depending on their dimension. In particular in this section we  prove that all optimal binary codes with $4$ codewords are linear codes, and so they are systematic codes. We recall our convention $0\in C$. A first version of this proof appeared in \cite{CGC-cd-phdthesis-ele}.

\begin{lem}\label{lem: equivalence 4 codewords}
$N_2(4,d)=S_2(2,d)=L_2(2,d)$.
\end{lem}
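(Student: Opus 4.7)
The plan is to show that all three quantities coincide with $\lceil 3d/2 \rceil$, by sandwiching them between a common lower and upper bound. By Remark~\ref{rem: inequalities}, we already have $N_2(4,d) \le S_2(2,d) \le L_2(2,d)$, so it suffices to prove $N_2(4,d) \ge \lceil 3d/2 \rceil$ on one side and exhibit a linear code attaining $L_2(2,d) \le \lceil 3d/2 \rceil$ on the other.

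For the lower bound I would apply the Plotkin bound (Theorem~\ref{thm: plotkin nonbinary}) with $q=2$ and $M=4$: any $(n,4,d)_2$ code satisfies
$$ n \ge \left\lceil d\cdot\frac{1-1/4}{1-1/2}\right\rceil = \left\lceil \frac{3d}{2}\right\rceil, $$
so $N_2(4,d) \ge \lceil 3d/2\rceil$. Note this matches $g_2(2,d) = d + \lceil d/2\rceil$, so if the construction below exists then the Griesmer bound for $k=2$ is already tight in the strongest possible way.

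For the upper bound I would give an explicit linear $[n,2,d]_2$ code of length $n = \lceil 3d/2\rceil$, treating the two parities of $d$ separately. If $d$ is even, take $d/2$ copies of the $[3,2,2]_2$ even-weight code generated by $\bigl(\begin{smallmatrix}1&1&0\\1&0&1\end{smallmatrix}\bigr)$; the resulting linear code has length $3d/2$ and all three nonzero codewords of weight $d$. If $d$ is odd, use the generator matrix
$$ G = \begin{pmatrix} \mathbf{1}_{(d+1)/2} & \mathbf{1}_{(d+1)/2} & \mathbf{0}_{(d-1)/2} \\ \mathbf{1}_{(d+1)/2} & \mathbf{0}_{(d+1)/2} & \mathbf{1}_{(d-1)/2} \end{pmatrix}, $$
where $\mathbf{1}_m$ and $\mathbf{0}_m$ are all-ones and all-zeros blocks of length $m$. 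A direct count shows the two rows and their sum have weights $d+1$, $d$, $d$ respectively, so the minimum distance is exactly $d$, while the length is $(d+1)/2+(d+1)/2+(d-1)/2=(3d+1)/2=\lceil 3d/2\rceil$.

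Combining the three bounds gives $\lceil 3d/2\rceil \le N_2(4,d) \le S_2(2,d) \le L_2(2,d) \le \lceil 3d/2\rceil$, whence equality throughout. There is no real obstacle: both ingredients are routine, and the only thing to double-check is the weight computation for the odd-$d$ generator matrix, which is immediate.
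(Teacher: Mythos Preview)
Your proof is correct, but it follows a different strategy from the paper's. The paper argues directly that $N_2(4,d)\ge L_2(2,d)$: given an optimal $(n,4,d)_2$ code containing $0$, pick any two nonzero codewords $c_1,c_2$; since $\w(c_1),\w(c_2),\w(c_1+c_2)$ are all at least $d$, the linear span of $c_1,c_2$ is an $[n,2,\ge d]_2$ code, so $n\ge L_2(2,d)$ without ever computing either side. You instead sandwich all three quantities against the explicit value $\lceil 3d/2\rceil$ via Plotkin below and a construction above. Your approach has the advantage of simultaneously determining the common value (which the paper postpones to a later corollary), while the paper's argument is shorter, needs no case split on the parity of $d$, and feeds naturally into the subsequent Theorem~\ref{thm: optimal are linear} showing that optimal size-$4$ codes are in fact linear.
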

 \begin{proof}
 We are going to show that $N_2(4,d)\ge L_2(2,d)$, and then Remark \ref{rem: inequalities} will conclude the proof.
 \\
 Let $C=\{c_0,\,c_1,\,c_2,\,c_3\}$ be an optimal $(n,4,d)_2$ code, i.e. $n=N_2(4,d)$, and we assume without loss of generality that $c_0$ is the zero codeword. The weights of $c_1$ and $c_2$ are at least $d$, and their distance is $d(c_1,c_2)=\mathrm{w}(c_1+c_2)\ge d$. Therefore the linear code generated by $c_1$ and $c_2$ have the same minimum distance as $C$, and it follows that $n\ge L_2(2,d)$.
 \end{proof}
A consequence of Lemma \ref{lem: equivalence 4 codewords} is that the Griesmer bound holds for all binary (nonlinear) codes with $4$ codewords. Furthermore, using the argument of the proof of Lemma \ref{lem: equivalence 4 codewords} we can build (binary optimal) linear codes starting from nonlinear ones. This construction is however not necessary, as explained in the following theorem.
\begin{thm}\label{thm: optimal are linear}
Let $C$ be an optimal $(n,4,d)_2$ code. Then $C$ is a linear code.
\end{thm}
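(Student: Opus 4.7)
Combining Lemma~\ref{lem: equivalence 4 codewords} with the well-known fact that the Griesmer bound is tight for two-dimensional binary linear codes, an optimal $(n,4,d)_2$ code has length $n = N_2(4,d) = L_2(2,d) = g_2(2,d) = d + \lceil d/2 \rceil = \lceil 3d/2 \rceil$. Writing $C = \{0, c_1, c_2, c_3\}$ (we may assume $0 \in C$), proving linearity is equivalent to showing $c_1 + c_2 + c_3 = 0$ as vectors, and I will verify this coordinate by coordinate. The plan is to classify each of the $n$ positions by the triple $(c_1[j], c_2[j], c_3[j]) \in \mathbb{F}_2^3$ and, via Plotkin-type sums applied to each $3$-element subset of $C$, to rule out every coordinate pattern except those satisfying $c_1[j] + c_2[j] + c_3[j] = 0$.

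The key observation is that for any three codewords $u,v,w \in C$ the pairwise distance sum satisfies
\[
d(u,v) + d(u,w) + d(v,w) = 2 \cdot |\{j : u[j], v[j], w[j] \text{ are not all equal}\}|,
\]
since at each coordinate the three bits either all agree (contributing $0$) or split as two-equal-and-one-different (contributing $2$). Combining this with $d(u,v) + d(u,w) + d(v,w) \geq 3d$ yields
\[
|\{j : u[j] = v[j] = w[j]\}| \leq n - \lceil 3d/2 \rceil = 0,
\]
so for every triple $\{u,v,w\} \subseteq C$ there is no coordinate at which the three codewords are simultaneously equal.

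Applying this conclusion to each of the four triples $\{0, c_1, c_2\}$, $\{0, c_1, c_3\}$, $\{0, c_2, c_3\}$ and $\{c_1, c_2, c_3\}$ rules out, respectively, every coordinate at which $c_1 = c_2 = 0$, $c_1 = c_3 = 0$, $c_2 = c_3 = 0$, or $c_1 = c_2 = c_3$; that is, the five patterns $(0,0,0), (0,0,1), (0,1,0), (1,0,0), (1,1,1)$ never occur. The only remaining patterns are $(0,1,1), (1,0,1), (1,1,0)$, each of which satisfies $c_1[j] + c_2[j] + c_3[j] = 0$, so $c_3 = c_1 + c_2$ and $C$ is linear. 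I do not anticipate a significant obstacle: once the pointwise identity for $d(u,v) + d(u,w) + d(v,w)$ is checked by the short per-position case analysis, the entire argument reduces to applying a triangle-type Plotkin bound to each of the four $3$-subsets of $C$, with the optimality hypothesis $n = \lceil 3d/2 \rceil$ collapsing each such inequality to the desired forbidden-pattern conclusion.
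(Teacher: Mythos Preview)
Your proof is correct, and it takes a genuinely different route from the paper's.

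The paper argues by contradiction: assuming $C$ is not linear, it locates a coordinate $i$ where $c_3[i]\neq c_1[i]+c_2[i]$, observes that the column $(0,c_1[i],c_2[i],c_3[i])$ must then have weight $1$ or $3$, and in either case (after possibly translating by $c_3$) exhibits an $[n-1,2,d]_2$ linear code, contradicting optimality. Your argument is instead a direct positive one: you first pin down $n=\lceil 3d/2\rceil$ via Lemma~\ref{lem: equivalence 4 codewords} and the tightness of Griesmer for $k=2$, then use the Plotkin-type identity $d(u,v)+d(u,w)+d(v,w)=2\,|\{j:\text{not all equal}\}|$ on each $3$-subset of $C$ to force every column into the even-weight patterns $(0,1,1),(1,0,1),(1,1,0)$, whence $c_3=c_1+c_2$ immediately.

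What each buys: the paper's proof is self-contained in that it never needs the numerical value of $n$, only the qualitative fact that no shorter code exists; it also reuses the linear-subcode idea already present in Lemma~\ref{lem: equivalence 4 codewords}. Your proof is shorter and more symmetric---it treats all four triples on equal footing and avoids any case split on column weight---but it front-loads the computation of $n$. Both approaches ultimately exploit that $n$ is exactly $\lceil 3d/2\rceil$; yours makes this quantitative tightness do all the work, while the paper's leaves it implicit in the puncturing contradiction.
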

\begin{proof}
As in the proof of Lemma \ref{lem: equivalence 4 codewords}, we assume that $c_0$ is the zero codeword. If $C$ is not linear, then there exists at least a position $i$ for which the $i$-th coordinate of $c_3$ is different from the $i$-th coordinate of $c_1+c_2$. Looking at the $i$-th components of the four codewords as a vector $v$ in $\left(\mathbb{F}_2\right)^4$ we claim to have only two possibilities: either $\mathrm{w}(v)=1$ or $\mathrm{w}(v)=3$. In fact, $\mathrm{w}(v)=0$ implies that $C$ is not optimal, $\mathrm{w}(v)=4$ contradicts the fact that $c_0\in C$ and $\mathrm{w}(v)=2$ contradicts the choice of $i$. Without loss of generality we can assume that we are in one of the following two cases:
\begin{equation}\nonumber
v=\left(
0,0,0,1
\right)
 \qquad\mathrm{or}\qquad 
v=\left(
0,1,1,1
\right)
\end{equation}
\indent We start from the first case, namely $\mathrm{w}(v)=1$, and we consider the $[n,2,d]_2$ linear code $\bar{C}$ generated by $c_1$ and $c_2$. Clearly, all codewords in $\bar{C}$ have the $i$-th component equal to zero. Then we can puncture $\bar{C}$, obtaining a $[n-1,2,d]_2$ linear code, contradicting the fact that $C$ is optimal.
\\
\indent We consider the second case, namely $\mathrm{w}(v)=3$. We consider the code $\tilde{C}$ obtained by adding $c_3$ to each codeword in $C$. $\tilde{C}$ is an optimal code with the same parameters as $C$, and the zero codeword still belongs to the code. However what we obtain looking at the $i$-th coordinate is a vector of weight $1$, and we can use the same argument as in the first case. 
\end{proof}
\begin{cor}
The Griesmer bound holds for binary codes with $4$ codewords. Furthermore
\begin{equation}\nonumber
N_2(4,d)=S_2(4,d)=L_2(2,d)=
\left\{
\begin{array}{ll}
\frac{3}{2}d,&\mathrm{if}\;d\;\mathrm{is}\;\mathrm{even}\\
\frac{3}{2}(d+1)-1,&\mathrm{if}\;d\;\mathrm{is}\;\mathrm{odd}
\end{array}
\right.
\end{equation}
\end{cor}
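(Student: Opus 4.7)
The three equalities $N_2(4,d)=S_2(2,d)=L_2(2,d)$ are already packaged in Lemma \ref{lem: equivalence 4 codewords}, so the only work left is to (i) verify that the Griesmer bound holds for all binary codes with $4$ codewords and (ii) compute the common value explicitly, showing it is attained.

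For (i), since linear codes satisfy the Griesmer bound we have $L_2(2,d)\ge g_2(2,d)$. Combining this with the chain $N_2(4,d)\ge L_2(2,d)$ from Lemma \ref{lem: equivalence 4 codewords} (which holds as equality) gives $N_2(4,d)\ge g_2(2,d)$, i.e. the Griesmer bound holds for every binary $(n,4,d)_2$ code.

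For (ii), I would evaluate $g_2(2,d)=d+\lceil d/2\rceil=\lceil 3d/2\rceil$ and check that this matches the piecewise expression in the statement: when $d$ is even, $\lceil 3d/2\rceil=3d/2$; when $d$ is odd, $\lceil 3d/2\rceil=(3d+1)/2=\tfrac{3(d+1)}{2}-1$. The remaining point is to exhibit, for each $d$, a binary $[\lceil 3d/2\rceil,2,d]_2$ linear code, which forces equality $L_2(2,d)=\lceil 3d/2\rceil$ and hence, by Lemma \ref{lem: equivalence 4 codewords}, equality everywhere. A direct construction suffices: partition the coordinate set into three blocks of sizes $\lceil d/2\rceil,\lfloor d/2\rfloor,\lceil d/2\rceil$ and take the two generators whose supports are, respectively, the union of the first two blocks and the union of the last two blocks. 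Then both generators and their sum have weight $\lfloor d/2\rfloor+\lceil d/2\rceil=d$ or, in the odd case, exactly $d$ by a straightforward count, so the resulting code has minimum distance $d$ and length $\lceil 3d/2\rceil$.

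There is no real obstacle here: the nontrivial step (reducing the nonlinear case to the linear one) has already been done in Lemma \ref{lem: equivalence 4 codewords} and Theorem \ref{thm: optimal are linear}, and the optimal $k=2$ linear construction is classical. The only thing to be a little careful about is the parity split when converting $g_2(2,d)$ into the piecewise form displayed in the corollary, which is a routine ceiling computation.
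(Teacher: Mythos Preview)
Your approach is essentially the same as the paper's: both invoke Lemma \ref{lem: equivalence 4 codewords} to reduce to the linear case, compute $g_2(2,d)=d+\lceil d/2\rceil$, and then exhibit an optimal $[\,\lceil 3d/2\rceil,2,d\,]_2$ linear code (the paper treats even $d$ first and punctures once to cover odd $d$, while you give a single three-block construction). One small slip to clean up: in the odd case the sum of your two generators is supported on the first and third blocks and hence has weight $2\lceil d/2\rceil=d+1$, not $d$; this does not affect the conclusion, since the minimum weight among the three nonzero codewords is still $d$.
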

\begin{proof}
The fact that the Griesmer bound holds for all codes of size $4$ follows directly from Lemma \ref{lem: equivalence 4 codewords} or Theorem~\ref{thm: optimal are linear}. This implies that
$$
N_2(4,d)\ge d+\left\lceil
\frac{d}{2}
\right\rceil
$$
We consider $d$ even, so that the previous equation is $N_2(4,d)=\frac{3}{2}d$. It is straightforward to exhibit a $\left[\frac{3}{2}d,2,d\right]_2$ linear code $C$, and this concludes the proof in the case of $d$ even. On the other hand, by puncturing $C$ we obtain a $\left[\frac{3}{2}d-1,2,d-1\right]_2$ linear code, which proves the case of odd distance.
\end{proof}


\section{On the structure of optimal binary codes with 8 codewords}\label{sec: code structure 8 codewords}

We consider in this section optimal codes with 8 codewords. First we prove that for these codes the Plotkin bound and the Griesmer bound coincide, implying that the Griesmer bound actually holds also for them. 

\begin{prop}\label{prop: griesmer eq plotkin}
For any $d$, $N_2(8,d)\ge g_2(3,d)$, namely
\begin{equation}
N_2(8,d)\ge\left\{
\begin{array}{ll}
7h ,& \mathrm{if}\;d=4h\\ 
7h+3 ,& \mathrm{if}\;d=4h+1\\ 
7h+4 ,& \mathrm{if}\;d=4h+2\\ 
7h+6 ,& \mathrm{if}\;d=4h+3 
\end{array}
\right.
.
\end{equation}
\end{prop}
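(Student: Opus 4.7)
The plan is to derive all four cases from the Plotkin bound (Theorem~\ref{thm: plotkin nonbinary}), which for $q=2$ and $M=8$ specialises to $N_2(8,d)\ge \lceil 7d/4\rceil$. A routine evaluation of the ceiling shows that this already coincides with $g_2(3,d)$ whenever $d\equiv 0, 2, 3 \pmod 4$: one checks that $\lceil 7(4h)/4\rceil = 7h$, $\lceil 7(4h+2)/4\rceil = 7h+4$, and $\lceil 7(4h+3)/4\rceil = 7h+6$, matching three of the four rows in the statement.

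The delicate case is $d = 4h+1$, where Plotkin yields only $N_2(8,d)\ge 7h+2$, one short of the claimed $7h+3$. Here I would argue by contradiction: assume there exists a binary $(7h+2, 8, 4h+1)$ code $C$. Since its minimum distance is odd, appending a parity-check bit to each codeword produces a $(7h+3, 8, d')_2$ code $C'$ with $d'\ge 4h+2$. Indeed, the Hamming distance between two extended codewords equals the original distance plus one when that distance was odd, and is unchanged when it was even; hence every pair originally at the minimum distance $4h+1$ is lifted to distance $4h+2$, while every pair originally at an even distance was already at least $4h+2$. Applying the Plotkin bound to $C'$ then gives
\[
7h+3 = \len(C') \ge \left\lceil \frac{7 d'}{4}\right\rceil \ge \left\lceil \frac{7(4h+2)}{4}\right\rceil = 7h+4,
\]
a contradiction, so no such $C$ can exist and $N_2(8, 4h+1)\ge 7h+3$.

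No step presents a genuine obstacle. The three cases $d \equiv 0, 2, 3 \pmod 4$ reduce to arithmetic, and the extra unit of slack needed for $d \equiv 1 \pmod 4$ is precisely what the parity-check extension supplies, using the same binary-specific trick (odd distances become even) that is exploited later in Corollary~\ref{cor: odd distances}. The only point deserving explicit verification is the elementary statement that the extension preserves the number of codewords and raises an odd minimum distance by exactly one.
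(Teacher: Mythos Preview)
Your proposal is correct and follows essentially the same route as the paper: both arguments derive all four cases from the Plotkin bound, with the single case $d\equiv 1\pmod 4$ requiring the parity-check extension to gain the missing unit. The paper is simply terser---it writes the inequality $8\le 2\lfloor (4h+2)/(8h+3-n)\rfloor$ for the case $d=4h+1$, which is nothing other than the even-distance Plotkin bound applied to the extended $(n+1,8,4h+2)$ code, exactly as you spell out; your presentation is more explicit and also notes (correctly) that the case $d\equiv 3\pmod 4$ needs no extension since $\lceil 7(4h+3)/4\rceil = 7h+6$ already.
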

\begin{proof}
Let us consider an $(N_2(8,d),8,d)_2$ code $C$.
Let $h=\left\lfloor\frac{d}{4}\right\rfloor$. There are four cases for $d$:
$$
d=4h,\qquad
d=4h+1,\qquad
d=4h+2,\qquad
d=4h+3 .
$$
\indent We start with the case $d=4h$ (so $h\ge1$),
for which
$$
g_2(3,4h)=\sum_{i=0}^2\left\lceil\frac{4h}{2^i}\right\rceil=7h .
$$
On the other hand, by the Plotkin bound we have
$$
N_2(8,d)\ge \min\left\{n\in \mathbb{N} \mid 8\leq 2\left\lfloor \frac{4h}{8h-n}\right\rfloor\right\} .
$$
Assuming $n< 7h$, we have $8h-n> h$. 
This implies that $$4>\frac{4h}{8h-n} ,$$ which contradicts our hypothesis and shows that the Griesmer bound and the Plotkin bound coincide.
\\
\indent In the case of $d=4h+2$,
$$
g_2(3,4h+2)=\sum_{i=0}^2\left\lceil\frac{4h+2}{2^i}\right\rceil=(4h+2)+(2h+1)+(h+1)=7h+4 .
$$
By the Plotkin bound
$$
\frac{4h+2}{8h+4-N_2(8,d)}
$$
which is equivalent to
$
N_2(8,d)\ge 7h+4.
$
\\
\indent In the case of $d=4h+1$, 
$$
8\leq 2\left\lfloor \frac{4h+2}{8h+3-N_2(8,d)}\right\rfloor ,
$$
hence $N_2(8,d)\ge7h+3$.
\\
\indent Finally, in the case of $d=4h+3$, by the same computation as above we obtain that $N_2(8,d)\ge7h+6$.
\end{proof}

\begin{thm}\label{thm: construction linear}
For any $d$, $L_2(3,d)=g_2(3,d) .$
\end{thm}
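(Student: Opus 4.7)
The lower bound $L_2(3,d) \ge g_2(3,d)$ is immediate from earlier material: Remark \ref{rem: inequalities} gives $L_2(3,d) \ge N_2(8,d)$, and Proposition \ref{prop: griesmer eq plotkin} gives $N_2(8,d) \ge g_2(3,d)$. Hence the substantive content of the theorem is the matching upper bound, and my plan is to exhibit an explicit linear $[g_2(3,d),3,d]_2$ code for every $d\ge 1$.

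I will split the construction according to $d \bmod 4$. Writing $d=4h+e$ with $e\in\{0,1,2,3\}$, the values of $g_2(3,d)$ already computed in Proposition \ref{prop: griesmer eq plotkin} are $7h$, $7h+3$, $7h+4$, and $7h+6$, respectively. The construction uses four generator matrices of rank $3$ over $\FF_2$, each giving an optimal code for its own small $d$: $G_0$, the $3\times 7$ simplex generator, realising a $[7,3,4]_2$ code; $G_1=I_3$, a $[3,3,1]_2$ code; $G_2$, the $3\times 4$ matrix with columns $(1,0,0)^T,(0,1,0)^T,(0,0,1)^T,(1,1,1)^T$, a $[4,3,2]_2$ code; and $G_3$, the simplex generator punctured in one column, a $[6,3,3]_2$ code. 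Each of these is elementary to verify by inspection.

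For fixed $d=4h+e$, my target code is generated by the block matrix $G=[\,G_0\mid\cdots\mid G_0\mid G_e\,]$ formed by concatenating $h$ copies of $G_0$ followed by one copy of $G_e$ (omitting $G_e$ when $e=0$ and omitting the simplex blocks when $h=0$). Because every block has rank $3$, for any nonzero message $v\in\FF_2^3$ the restriction of $vG$ to each block is a nonzero codeword of that block, so the weight of $vG$ is at least $4h+e=d$. The length of $G$ is $7h$, $7h+3$, $7h+4$, or $7h+6$ as $e$ runs through $0,1,2,3$, which matches $g_2(3,d)$ in each case. This yields the required $[g_2(3,d),3,d]_2$ linear code, completing the proof.

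There is essentially no obstacle: the four base codes $G_0,\dots,G_3$ are all standard constructions, the observation that concatenating full-rank blocks makes minimum weights additive is immediate, and the four length/distance identities are precisely those already recorded in Proposition \ref{prop: griesmer eq plotkin}.
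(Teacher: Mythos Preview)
Your proof is correct and takes essentially the same approach as the paper: both construct explicit $[3,3,1]$, $[4,3,2]$, $[6,3,3]$, and $[7,3,4]$ base codes and then concatenate $h$ copies of the simplex block with the appropriate base block to handle general $d=4h+e$. The only cosmetic differences are that you explicitly record the lower bound (the paper relies implicitly on the Griesmer bound for linear codes) and you phrase the minimum-distance computation via the ``full-rank blocks make minimum weights additive'' observation rather than by direct inspection.
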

\begin{proof}
We consider the following three binary matrices:
$$
I_3=
\begin{bmatrix}
1&0&0\\
0&1&0\\
0&0&1
\end{bmatrix}
, \;1_3=
\begin{bmatrix}
1\\
1\\
1
\end{bmatrix},\;
N_3=
\begin{bmatrix}
0&1&1\\
1&0&1\\
1&1&0
\end{bmatrix}.
$$
We remark that the code generated by $I_3$ (resp. $\left[\;I_3\; \middle|\; 1_3\; \right]$ and $\left[\;I_3\; \middle|\; N_3\; \right]$) is a $[3,3,1]_2$ (resp. a $[4,3,2]_2$ and a $[6,3,3]_2$) linear code. These codes meet the Griesmer bound. We denote with $G_3$ the matrix $\left[\;I_3\; \middle|\; N_3\; \middle|\; 1_3\;\right]$, i.e.
$$
G_3=
\begin{bmatrix}
1&0&0&0&1&1&1\\
0&1&0&1&0&1&1\\
0&0&1&1&1&0&1
\end{bmatrix}.
$$
The code generated by $G_3$ is a $[7,3,4]_2$ linear code, which again attains the Griesmer bound. Thus, $L_2(3,d)=g_2(3,d)$ for $1\leq d\leq 4$.\\
\indent
 Let $d=4h$. 
We denote with $G_{3,h}$ the $3\times 7h$ matrix obtained by repeating $h$ times the matrix $G_3$. The code generated by $G_{3,h}$ is a $[7h,3,4h]_2$ linear code, which attains the Griesmer bound.
\\
\indent For the other three cases, we consider the matrices
$$
\left\{
\begin{array}{l}
\left[\;G_{3,h}\; \middle|\; I_3\; \right]\\ \\
\left[\;G_{3,h}\; \middle|\; I_3\; \middle|\; 1_3\;\right]\\ \\
\left[\;G_{3,h}\; \middle|\; I_3\; \middle|\; N_3\;\right] ,
\end{array}
\right. 
$$
that generate, respectively, a $[7h+3,3,4h+1]_2$, a $[7h+4,3,4h+2]_2$ and a $[7h+6,3,4h+3]_2$ linear code, each attaining the Griesmer bound.
\end{proof}
Propositions \ref{prop: griesmer eq plotkin} and Theorem \ref{thm: construction linear} imply the following corollary.
\begin{cor}\label{cor: optimal length}
For any $d$, $N_2(8,d)=S_2(3,d)=L_2(3,d)$, and 
\begin{equation}
N_2(8,d)=\left\{
\begin{array}{ll}
7h ,& \mathrm{if}\;d=4h\\ 
7h+3 ,& \mathrm{if}\;d=4h+1\\ 
7h+4 ,& \mathrm{if}\;d=4h+2\\ 
7h+6 ,& \mathrm{if}\;d=4h+3
\end{array}
\right.
\end{equation}
\end{cor}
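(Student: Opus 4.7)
The plan is to sandwich $N_2(8,d)$, $S_2(3,d)$, and $L_2(3,d)$ between the same two quantities and force all three to coincide. The key observation is that Proposition \ref{prop: griesmer eq plotkin} provides a lower bound $N_2(8,d) \ge g_2(3,d)$ via the Plotkin bound, while Theorem \ref{thm: construction linear} provides an upper bound $L_2(3,d) \le g_2(3,d)$ via an explicit construction (it in fact gives equality, since the Griesmer bound is a lower bound on $L_2(3,d)$ by definition).

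First, I would invoke Remark \ref{rem: inequalities}, which yields the chain
\begin{equation*}
N_2(8,d) \le S_2(3,d) \le L_2(3,d).
\end{equation*}
Next, by Proposition \ref{prop: griesmer eq plotkin}, the first term satisfies $N_2(8,d) \ge g_2(3,d)$, where $g_2(3,d)$ is the explicit piecewise expression in the four cases $d = 4h, 4h+1, 4h+2, 4h+3$. By Theorem \ref{thm: construction linear}, the last term satisfies $L_2(3,d) = g_2(3,d)$. Splicing these gives
\begin{equation*}
g_2(3,d) \le N_2(8,d) \le S_2(3,d) \le L_2(3,d) = g_2(3,d),
\end{equation*}
so all inequalities collapse to equalities.

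Finally, to obtain the explicit formula for $N_2(8,d)$, I would simply read off the four cases already computed in Proposition \ref{prop: griesmer eq plotkin}. I do not foresee any substantive obstacle here: the corollary is essentially a bookkeeping step, combining a matching lower bound (Plotkin $=$ Griesmer for these parameters) with an explicit linear realization built by concatenating copies of the Simplex-type generator matrix $G_3$ and, as needed, the blocks $I_3$, $1_3$, $N_3$ as in the proof of Theorem \ref{thm: construction linear}.
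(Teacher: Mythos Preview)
Your proposal is correct and matches the paper's approach exactly: the paper simply states that the corollary follows from Proposition \ref{prop: griesmer eq plotkin} and Theorem \ref{thm: construction linear}, and you have spelled out the sandwich argument (via Remark \ref{rem: inequalities}) that makes this implication precise.
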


\section{Counterexamples to the Griesmer bound: a family of optimal systematic binary codes}
\label{sec:counterexample}
In previous sections we identified several sets of parameters for which the Griesmer bound holds in the systematic case. In this section we focus our attention on binary systematic (nonlinear) code for which the Griesmer bound does not hold. It is known that there exist pairs $(k,d)$ for which $N_2(2^k,d)<g_2(k,d)$, but it has not been clear whether the same is true for systematic codes. In this section we construct a family of optimal systematic nonlinear codes contradicting the Griesmer bound. 
In \cite{CGC-cod-art-Levenshtein64application}, Levenshtein has shown that if Hadamard matrices of certain orders exist, 
then the binary codes obtained from them meet the Plotkin bound. 
Levenshtein's method to construct such codes can be found also in the proof of Theorem 8 of \cite[Ch.~2,\S 3]{CGC-cd-book-macwilliamsTOT}.
In particular, given a Hadamard matrix of order $2^k+4$, it is possible to construct a  $(2^k+3,2^k,2^{k-1}+2)_2$ code $D_k$.
We recall that binary codes attaining the Plotkin bound are equidistant codes.
\begin{defn}
A code $C$ is called an \textit{equidistant} code if any two codewords have the same distance $d$.
\end{defn}
We consider now the family of binary simplex codes $\mathcal{S}_k$, which can be defined as the codes generated by the $k\times \left(2^{k}-1\right)$ matrices whose columns are all the non-zero vectors of $\left(\mathbb{F}_2\right)^k$. Simplex codes are $[2^k-1,k,2^{k-1}]_2$ equidistant codes. 
The following proposition follows directly from the application of the Plotkin bound to codes with size $2^k$ and distance a multiple of $2^{k-1}$.
\begin{prop}\label{prop: min n, d=2k-1h}
Let $h\ge1$ be a positive integer. Then 
\begin{equation}\nonumber
N_2(2^k,2^{k-1}h)\ge \left(2^k-1\right)h .
\end{equation}
\end{prop}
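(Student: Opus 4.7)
The plan is to derive this bound directly from the Plotkin length bound of Theorem \ref{thm: plotkin nonbinary}, equation \eqref{eq: plotkin length}, which applies to \emph{every} $(n,M,d)_q$ code (not merely linear or systematic ones) and in particular to every binary code with $M=2^k$ codewords and minimum distance $2^{k-1}h$. None of the machinery of Sections \ref{sec:systematic_nonlinear_codes}--\ref{sec:weak_griesmer} is needed; the claim is really just a specialisation of Plotkin to parameters where the Plotkin inequality happens to collapse to an integer.

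Specialising \eqref{eq: plotkin length} to $q=2$, the factor $(1-1/q)^{-1}$ becomes $2$, so the bound reads $n\ge \left\lceil 2d\left(1-1/M\right)\right\rceil$. Substituting $M=2^k$ and $d=2^{k-1}h$ gives
$$
2d\left(1-\frac{1}{M}\right)=2\cdot 2^{k-1}h\cdot\frac{2^k-1}{2^k}=(2^k-1)h,
$$
which is already a positive integer. Hence the ceiling has no effect and one obtains $n\ge(2^k-1)h$ for every binary $(n,2^k,2^{k-1}h)_2$ code. Taking the minimum over all such codes yields $N_2(2^k,2^{k-1}h)\ge(2^k-1)h$, as claimed.

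There is essentially no obstacle here, the whole argument being a one-line evaluation. The only point worth emphasising is that the \emph{length} form of the Plotkin bound holds unconditionally, whereas the companion \emph{size} form in Theorem \ref{thm: plotkin nonbinary} carries the hypothesis $n<qd/(q-1)$; since we only invoke the length form, no such side condition needs to be checked. The cleanliness of the resulting bound (an exact integer, with no ceiling) is exactly what one should expect in the regime $q^{k-1}\mid d$, in line with Proposition \ref{prop: plotkin equal griesmer}, where the Plotkin and Griesmer bounds coincide.
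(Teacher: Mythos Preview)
Your proof is correct and follows precisely the approach indicated in the paper, which states that the proposition ``follows directly from the application of the Plotkin bound to codes with size $2^k$ and distance a multiple of $2^{k-1}$.'' Your explicit evaluation of the length form \eqref{eq: plotkin length} at $q=2$, $M=2^k$, $d=2^{k-1}h$ is exactly what is intended, and your remark that the ceiling is vacuous because $(2^k-1)h$ is already an integer is on point.
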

We recall that all $[(2^k-1)h,k,(2^{k-1})h]_2$ codes are equivalent to a sequence of Simplex codes  \cite{CGC-cod-bonisoli1984every}. This fact lead to the following corollary.
\begin{cor}
Let $h\ge 1$, then $N_2\left(2^k,2^{k-1}h\right)=S_2\left(k,2^{k-1}h\right)=L_2\left(k,2^{k-1}h\right)=\left(2^{k}-1\right)h$.
\end{cor}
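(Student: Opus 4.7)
The plan is to sandwich the three quantities $N_2, S_2, L_2$ between the same upper and lower bound $(2^k-1)h$, and invoke Remark \ref{rem: inequalities} to collapse the chain.

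First I would record the easy direction given by Remark \ref{rem: inequalities}, namely $N_2(2^k,2^{k-1}h)\le S_2(k,2^{k-1}h)\le L_2(k,2^{k-1}h)$, so it suffices to show $L_2(k,2^{k-1}h)\le (2^k-1)h$ and $N_2(2^k,2^{k-1}h)\ge (2^k-1)h$.

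For the lower bound I would simply quote Proposition \ref{prop: min n, d=2k-1h}, which gives $N_2(2^k,2^{k-1}h)\ge (2^k-1)h$ directly from the Plotkin bound applied with $M=2^k$ and $d=2^{k-1}h$.

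For the upper bound I would use the Simplex code $\mathcal{S}_k$, which is a $[2^k-1,k,2^{k-1}]_2$ equidistant linear code, and form its $h$-fold repetition $\mathcal{S}_k^{(h)}$, i.e.\ the code whose generator matrix is the horizontal concatenation of $h$ copies of the generator matrix of $\mathcal{S}_k$. This is a linear code of length $(2^k-1)h$, dimension $k$, and since every nonzero codeword of $\mathcal{S}_k$ has weight exactly $2^{k-1}$, every nonzero codeword of $\mathcal{S}_k^{(h)}$ has weight exactly $2^{k-1}h$. Hence $L_2(k,2^{k-1}h)\le (2^k-1)h$, and combining the two bounds with the chain above closes the proof. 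There is no real obstacle; this corollary is essentially a restatement of the fact that the Simplex-repetition construction matches the Plotkin lower bound for these parameters.
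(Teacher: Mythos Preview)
Your proof is correct and follows essentially the same approach as the paper: the lower bound comes from Proposition~\ref{prop: min n, d=2k-1h} (Plotkin), the upper bound from the $h$-fold repetition of the Simplex code $\mathcal{S}_k$, and the chain in Remark~\ref{rem: inequalities} collapses everything. The paper phrases the upper bound via Bonisoli's characterization of $[(2^k-1)h,k,2^{k-1}h]_2$ codes as sequences of Simplex codes, but this is the same construction you write down explicitly.
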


\par\bigskip
We now make use of $D_k$ and $\mathcal{S}_k$ to construct our claimed family $\mathcal{C}_k$ of optimal systematic codes.
\\
We consider $\mathcal{C}_k$ the $(2^{k+1}+2,2^k,d)_2$ code, with the following properties:
\begin{itemize}
\item puncturing $\mathcal{C}_k$ in the last $2^k+3$ coordinates we obtain $\mathcal{S}_k$;
\item puncturing $\mathcal{C}_k$ in the first $2^k-1$ coordinates we obtain $D_k$.
\end{itemize}
Note that such a code is completely defined.
Since $\mathcal{S}_k$ is a linear code and both $D_k$ and $\mathcal{S}_k$ are equidistant codes, $\mathcal{C}_k$ is an equidistant systematic code with distance $d=2^k+2$.  \\
Applying the Plotkin bound to these parameters, we can see that $\mathcal{C}_k$ is not an optimal code since it has only $2^k$ codewords instead of $2^k+2$. However, if $k\ge 2$, it is optimal as a systematic code, since we can add to it at most two codewords and therefore we cannot increase its dimension while keeping the same distance. On the other hand, by the Griesmer bound we obtain
$$
g_2(k,2^k+2)=\sum_{i=0}^{k-1}\left\lceil\frac{2^k+2}{2^i}\right\rceil
=
\sum_{i=0}^{k-1}2^{k-i}+\sum_{i=0}^{k-1}\left\lceil\frac{2}{2^i}\right\rceil .
$$ 
By direct computation
$
g_2(k,2^k+2)=2^{k+1}+k-1.
$
Since $\len(\mathcal{C}_k)=2^{k+1}+2$, if $k>3$ then $\mathcal{C}_k$ contradicts the Griesmer bound. 
\begin{prop}\label{prop: family Ck}
The family $\mathcal{C}_k$ is a family of optimal systematic equidistant binary codes.
\end{prop}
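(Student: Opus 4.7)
My approach is to prove the four properties asserted in Proposition \ref{prop: family Ck} — binary, equidistant, systematic, and optimal as a systematic code — one by one. The binary property is immediate. The equidistant and systematic properties are essentially already stated in the construction paragraphs preceding the statement; what remains is to consolidate those remarks and to give a clean quantitative argument for optimality. The key estimate will be an application of the Plotkin bound (Theorem \ref{thm: plotkin nonbinary}).

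For the equidistant property, I would write each codeword of $\mathcal{C}_k$ uniquely as a concatenation $(s,t)$ where $s \in \mathcal{S}_k$ is the restriction to the first $2^k-1$ coordinates and $t \in D_k$ the restriction to the last $2^k+3$. Since $|\mathcal{C}_k| = |\mathcal{S}_k| = |D_k| = 2^k$, the two projection maps must in fact be bijections, so any two distinct codewords of $\mathcal{C}_k$ differ on both blocks simultaneously. The equidistance of $\mathcal{S}_k$ (distance $2^{k-1}$) and of $D_k$ (distance $2^{k-1}+2$) then yields a total distance of $2^{k-1}+(2^{k-1}+2) = 2^k+2$ for every pair of distinct codewords.

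For the systematic property, I would use that $\mathcal{S}_k$ is a $[2^k-1,k,2^{k-1}]_2$ linear code, hence equivalent to a code whose first $k$ coordinates form the identity. With that equivalent representation in place, a codeword of $\mathcal{C}_k$ is determined by its first $k$ coordinates, which pin down the codeword of $\mathcal{S}_k$ sitting in the first block, and in turn the codeword of $D_k$ sitting in the second block through the bijection used in the construction. Thus these $k$ positions play the role of the systematic part of $\mathcal{C}_k$.

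For optimality, I would apply the Plotkin bound to an arbitrary systematic $(n', 2^k, 2^k+2)_2$ code, obtaining
\[
n' \;\ge\; \left\lceil 2(2^k+2)\bigl(1-2^{-k}\bigr) \right\rceil \;=\; \left\lceil 2^{k+1}+2-2^{2-k} \right\rceil.
\]
For $k \ge 3$ the correction $2^{2-k}$ lies in $(0,1/2]$, so the ceiling collapses to $2^{k+1}+2 = \len(\mathcal{C}_k)$, proving that $\mathcal{C}_k$ achieves the minimum systematic length for its size and distance. The main (mild) obstacle is that the text asserts $\mathcal{C}_k$ is ``completely defined'' without explicitly writing down the bijection $\phi \colon \mathcal{S}_k \to D_k$ used to pair up first-block and second-block projections; I would handle this by observing that all three non-trivial properties above are insensitive to the choice of $\phi$, so any bijection produces a valid member of the family.
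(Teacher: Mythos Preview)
Your argument is correct and, for the equidistant and systematic parts, it is exactly the reasoning sketched in the paragraph preceding the proposition (you have simply spelled out the bijection argument more carefully).

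For optimality there is a small but genuine difference worth noting. The paper applies the $M$-form of the Plotkin bound at the fixed length $n=2^{k+1}+2$ to conclude $M\le 2^k+2$, and then argues that one ``cannot increase the dimension'' of $\mathcal{C}_k$. Strictly speaking this is a dimension-maximality statement, not the length-minimality that the paper's own definition of ``optimal systematic code'' demands. Your use of the length form of Plotkin,
\[
n'\ \ge\ \bigl\lceil 2(2^k+2)\bigl(1-2^{-k}\bigr)\bigr\rceil\ =\ \bigl\lceil 2^{k+1}+2-2^{2-k}\bigr\rceil,
\]
addresses the stated definition directly and is the cleaner route. It also makes the correct range visible: the ceiling equals $2^{k+1}+2$ precisely for $k\ge 3$, whereas for $k=2$ one gets only $9$, and indeed $S_2(2,6)=9<10=\len(\mathcal{C}_2)$, so the paper's parenthetical ``if $k\ge 2$'' is too generous. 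Your restriction to $k\ge 3$ is the right one; since the codes $\mathcal{C}_k$ are of interest only for $k>3$ anyway (where they beat Griesmer), this is harmless. Your remark that the claimed properties are insensitive to the unspecified bijection $\phi:\mathcal{S}_k\to D_k$ is also well taken.
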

While in Sections \ref{sec: optimal 4 words} and \ref{sec: code structure 8 codewords} we have shown that codes of dimension $2$ or $3$ cannot contradict the Griesmer bound, by using the family $\mathcal{C}_k$ we can obtain for each possible $k>3$ an optimal systematic code whose length is smaller than the length of any possible linear code with the same dimension and distance, as stated in the following theorem.
\begin{thm}\label{thm: optimal systematic dimension}
Let $k>3$. If there exists a Hadamard matrix of order $2^k+4$, then there exists at least a distance $d$ for which $S_2(k,d)<L_2(k,d)$.
\end{thm}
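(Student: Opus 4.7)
The plan is to take $d = 2^k + 2$ and exhibit a systematic code of this distance whose length is strictly below the Griesmer bound, which (being the true minimum length for linear codes with these parameters or a lower bound for it) will witness the strict inequality $S_2(k,d) < L_2(k,d)$.

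First, I would invoke the hypothesis: the existence of a Hadamard matrix of order $2^k+4$ gives, via Levenshtein's construction, the Plotkin-meeting code $D_k$ with parameters $(2^k+3, 2^k, 2^{k-1}+2)_2$. Combined with the simplex code $\mathcal{S}_k$ of parameters $[2^k-1, k, 2^{k-1}]_2$, this allows us to form the systematic code $\mathcal{C}_k$ of parameters $(2^{k+1}+2, 2^k, 2^k+2)_2$ constructed just before the statement. By Proposition \ref{prop: family Ck}, $\mathcal{C}_k$ is a systematic code, so in particular $S_2(k, 2^k+2) \le 2^{k+1}+2$.

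Next, I would compute (as already done in the text preceding the theorem) that
\[
g_2(k, 2^k+2) = \sum_{i=0}^{k-1}\left\lceil \frac{2^k+2}{2^i}\right\rceil = 2^{k+1} + k - 1.
\]
Since the Griesmer bound holds for every linear code, $L_2(k, 2^k+2) \ge 2^{k+1}+k-1$. For $k > 3$ we have $k - 1 > 2$, so
\[
L_2(k, 2^k+2) \ge 2^{k+1} + k - 1 > 2^{k+1} + 2 \ge S_2(k, 2^k+2),
\]
which establishes the claim with the explicit distance $d = 2^k+2$.

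There is no real obstacle here: the theorem is an immediate repackaging of the construction of $\mathcal{C}_k$ together with the arithmetic computation of $g_2(k,2^k+2)$ already carried out. The only thing worth checking carefully is that $\mathcal{C}_k$ is indeed systematic (which follows from the fact that its puncturing to the first $2^k-1$ coordinates is the simplex code $\mathcal{S}_k$, and a generator matrix of $\mathcal{S}_k$ can be taken in systematic form on $k$ of those coordinates, so the $k$ systematic positions inherited from $\mathcal{S}_k$ serve as systematic positions for $\mathcal{C}_k$), and that the inequality $k-1 > 2$ used at the final step is the precise reason for the assumption $k > 3$.
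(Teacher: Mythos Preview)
Your proposal is correct and follows essentially the same route as the paper: the theorem is indeed just a repackaging of the construction of $\mathcal{C}_k$ together with the computation $g_2(k,2^k+2)=2^{k+1}+k-1$ already carried out in the text preceding the statement, and the inequality $2^{k+1}+2 < 2^{k+1}+k-1$ for $k>3$. Your added remark on why $\mathcal{C}_k$ inherits systematic positions from $\mathcal{S}_k$ is a useful clarification that the paper leaves implicit.
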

On the other hand, the family of optimal systematic codes presented in this section have distance $2^k+2$. By puncturing them in a non-systematic component, for each $k>3$, it is possible to construct $(2^{k+1}+1,2^k,2^k+1)_2$ optimal systematic codes contradicting the Griesmer bound. 
Theorem \ref{thm:griesmer_2r2s} and Corollary \ref{cor: odd distances} imply that for $k<3$ optimal systematic codes have to satisfy the Griesmer bound. Putting all together we can state the following theorem.
 \begin{thm}\label{thm: distances counterexamle}
 Let $r$ be a positive integer, and let $d=2^r+1$ or $d=2^r+2$. Then
\begin{enumerate}
\item if $r<3$ then all optimal systematic binary codes with dimension $k$ and distance $d$ have length at least equal to $g_2(k,d)$;
\item if $r>3$, assuming there exists a Hadamard matrix of order $2^k+4$, then $S_2(k,d)<L_2(k,d)$.
\end{enumerate}
 \end{thm}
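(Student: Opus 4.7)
The statement has two disjoint parts, so my plan is to handle them independently by routing each to already-established machinery.

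For part (1), I would enumerate the four relevant distances $d\in\{3,4,5,6\}$ that arise from $r\in\{1,2\}$ and show that each already falls under a result proved earlier. Specifically, $d=4=2^2$ is covered directly by Corollary \ref{cor: griesmer 2l} (powers of two). For $d=6=2^3-2^1$, Theorem \ref{thm:griesmer_2r2s} applies with parameters $r=3$, $s=1$. For the odd distances, I observe that $3=2^2-1$ is handled by Corollary \ref{cor: odd distances} (the $d=2^s-1$ branch with $s=2$), and $5=2^3-2^1-1$ is handled by the same corollary (the $d=2^r-2^s-1$ branch with $r=3$, $s=1$). This case analysis exhausts all four values of $d$ and yields $S_2(k,d)\ge g_2(k,d)$ in each instance.

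For part (2), the plan is to invoke the family $\mathcal{C}_k$ built just above from a Hadamard matrix of order $2^k+4$ together with the Simplex code $\mathcal{S}_k$. By Proposition \ref{prop: family Ck}, $\mathcal{C}_k$ is an optimal systematic $(2^{k+1}+2,\,2^k,\,2^k+2)_2$ code; writing $r=k$ this gives the distance $d=2^r+2$. The computation $g_2(k,2^k+2)=2^{k+1}+k-1$ performed in the paragraph preceding the theorem shows that for $k>3$ the length $2^{k+1}+2$ is strictly less than $g_2(k,2^k+2)$, so $S_2(k,d)\le 2^{k+1}+2<g_2(k,d)\le L_2(k,d)$, where the last inequality is the classical Griesmer bound for linear codes.

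To cover the companion distance $d=2^r+1$, I would puncture $\mathcal{C}_k$ in any non-systematic coordinate, as already suggested in the passage following Theorem \ref{thm: optimal systematic dimension}. Because $\mathcal{C}_k$ is equidistant with distance $2^k+2$, a single puncture decreases the length by one and the distance by at most one, yielding a systematic $(2^{k+1}+1,\,2^k,\,d')_2$ code with $d'\ge 2^k+1$; by Lemma \ref{lem: reducing d} we may then realise exactly $d=2^k+1$. A direct computation gives
\[
g_2(k,2^k+1)=(2^k+1)+\sum_{i=1}^{k-1}(2^{k-i}+1)=2^{k+1}+k-2,
\]
which for $k>3$ strictly exceeds $2^{k+1}+1$, so again $S_2(k,d)<L_2(k,d)$.

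The technical content is minimal; the main work is the bookkeeping in part (1), where one must verify that the four small distances actually fit the templates of Corollaries \ref{cor: griesmer 2l}, \ref{cor: odd distances} and Theorem \ref{thm:griesmer_2r2s} with the right choices of $(r,s)$. The only calculation worth double-checking is the Griesmer evaluation $g_2(k,2^k+1)=2^{k+1}+k-2$ used to deduce the counterexample for $d=2^r+1$; beyond that, everything is a direct appeal to previously proved statements.
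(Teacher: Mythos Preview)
Your proposal is correct and follows essentially the same route as the paper: part (1) is reduced to the already-proved results on distances of the form $2^r-2^s$ and $2^r-2^s-1$ (the paper cites only Theorem \ref{thm:griesmer_2r2s} and Corollary \ref{cor: odd distances}, implicitly writing $4=2^3-2^2$, whereas you invoke Corollary \ref{cor: griesmer 2l} for $d=4$, a cosmetic difference), and part (2) is exactly the $\mathcal{C}_k$ construction plus a single puncture, with the Griesmer values $g_2(k,2^k+2)=2^{k+1}+k-1$ and $g_2(k,2^k+1)=2^{k+1}+k-2$ computed correctly.
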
 
 This leaves as open problem the case $r=3$, namely the case of a code whose distance is either $9$ or $10$.

%
%
\section{Conclusions}\label{sec:conclusion}
In this work we provide a collection of results on optimality for systematic codes.
The Griesmer bound is one of the few bounds which can only be applied to linear codes. Classical counterexamples arose from the Levensthein's method for building optimal nonlinear codes, however this method does not provide specific counterexamples for the systematic case. It was therefore not fully understood whether the Griesmer bound would hold for systematic nonlinear codes, or whether there exist families of parameters $(k,d)$ for which the bound could be applied to the nonlinear case. 
\\
As regards nonlinear codes satisfying the Griesmer bound, the main results of our work are Theorem \ref{thm:griesmer_2r2s} and Corollary \ref{cor: odd distances}, in which we prove that the Griesmer bound can be applied to binary systematic nonlinear codes whose distance $d$ is such that
\begin{enumerate}
\item $d=2^r$,
\item $d=2^r-1$,
\item $d=2^r-2^s$, or
\item $d=2^r-2^s-1$.
\end{enumerate}
Moreover, an optimal code with four codewords is linear while with eight codewords attains the Griesmer bound.\\
On the other hand, Theorems \ref{thm: optimal systematic dimension} and \ref{thm: distances counterexamle} prove that the Griesmer bound does not hold in general for systematic codes, and we proved this by explicit construction of the family $\mathcal{C}_k$ of optimal systematic codes. 
In particular, Theorem \ref{thm: optimal systematic dimension} shows that, if $k>3$ is such that Hadamard matrices of order $2^k+4$ exist, then there exists a binary systematic nonlinear code with combinatorial dimension $k$ achieving better error correction capability than any linear code with the same size and length.
Finally, in Section \ref{sec:weak_griesmer} we provide some bounds for systematic codes derived from the Griesmer bound.

\section{Aknowledgements}\label{sec:conclusion}
The first two authors would like to thank their (former) supervisor: the last author. 
The second author would also like to thank Emanuele Bellini for the help in the research about the Griesmer bound, whose partial results were presented at WCC 2015, April 13-17, 2015 Paris, France.

%
%

\bibliographystyle{amsalpha}
\bibliography{RefsCGC}

\end{document}